\documentclass[11pt]{article}

\usepackage[margin=1in]{geometry}
\usepackage{amsmath,amssymb,amsthm}
\usepackage{xcolor}
\usepackage{graphicx}
\usepackage{microtype}
\usepackage[hidelinks]{hyperref}

\usepackage{braket}
\usepackage{mathtools}

\usepackage{array}
\usepackage{booktabs}
\usepackage{tabularx}

\usepackage{enumitem}

\usepackage{tikz}
\usetikzlibrary{arrows.meta, decorations.pathreplacing, positioning, calc, fit, backgrounds}

\usepackage[most]{tcolorbox}

\usepackage[nameinlink,noabbrev]{cleveref}
\crefname{theorem}{Theorem}{Theorems}
\Crefname{theorem}{Theorem}{Theorems}
\crefname{lemma}{Lemma}{Lemmas}
\Crefname{lemma}{Lemma}{Lemmas}
\crefname{proposition}{Proposition}{Propositions}
\Crefname{proposition}{Proposition}{Propositions}
\crefname{corollary}{Corollary}{Corollaries}
\Crefname{corollary}{Corollary}{Corollaries}
\crefname{definition}{Definition}{Definitions}
\Crefname{definition}{Definition}{Definitions}
\crefname{remark}{Remark}{Remarks}
\Crefname{remark}{Remark}{Remarks}
\crefname{assumption}{Assumption}{Assumptions}
\Crefname{assumption}{Assumption}{Assumptions}

\makeatletter
\def\qa@prefix@def{def}
\def\qa@prefix@lem{lem}
\def\qa@prefix@prop{prop}
\def\qa@prefix@cor{cor}
\def\qa@prefix@rem{rem}
\def\qa@prefix@ass{ass}
\def\qa@prefix@thm{thm}
\def\qa@prefix@sec{sec}
\def\qa@prefix@subsec{subsec}
\def\qa@prefix@subsubsec{subsubsec}
\def\qa@prefix@tab{tab}
\def\qa@prefix@fig{fig}
\def\qa@prefix@eq{eq}
\def\qa@prefix@app{app}
\def\qa@getlabelprefix#1:#2\@nil{\def\qa@labelprefix{#1}}
\newcommand{\qa@typedCref}[1]{%
  \def\qa@labelprefix{}%
  \expandafter\qa@getlabelprefix#1:\@nil
  \ifx\qa@labelprefix\qa@prefix@def
    Definition~\ref{#1}%
  \else\ifx\qa@labelprefix\qa@prefix@lem
    Lemma~\ref{#1}%
  \else\ifx\qa@labelprefix\qa@prefix@prop
    Proposition~\ref{#1}%
  \else\ifx\qa@labelprefix\qa@prefix@cor
    Corollary~\ref{#1}%
  \else\ifx\qa@labelprefix\qa@prefix@rem
    Remark~\ref{#1}%
  \else\ifx\qa@labelprefix\qa@prefix@ass
    Assumption~\ref{#1}%
  \else\ifx\qa@labelprefix\qa@prefix@thm
    Theorem~\ref{#1}%
  \else\ifx\qa@labelprefix\qa@prefix@sec
    Section~\ref{#1}%
  \else\ifx\qa@labelprefix\qa@prefix@subsec
    Section~\ref{#1}%
  \else\ifx\qa@labelprefix\qa@prefix@subsubsec
    Section~\ref{#1}%
  \else\ifx\qa@labelprefix\qa@prefix@tab
    Table~\ref{#1}%
  \else\ifx\qa@labelprefix\qa@prefix@fig
    Figure~\ref{#1}%
  \else\ifx\qa@labelprefix\qa@prefix@eq
    Equation~\ref{#1}%
  \else\ifx\qa@labelprefix\qa@prefix@app
    Appendix~\ref{#1}%
  \else
    \ref{#1}%
  \fi\fi\fi\fi\fi\fi\fi\fi\fi\fi\fi\fi\fi\fi
}
\newcommand{\qa@typedcref}[1]{%
  \def\qa@labelprefix{}%
  \expandafter\qa@getlabelprefix#1:\@nil
  \ifx\qa@labelprefix\qa@prefix@def
    definition~\ref{#1}%
  \else\ifx\qa@labelprefix\qa@prefix@lem
    lemma~\ref{#1}%
  \else\ifx\qa@labelprefix\qa@prefix@prop
    proposition~\ref{#1}%
  \else\ifx\qa@labelprefix\qa@prefix@cor
    corollary~\ref{#1}%
  \else\ifx\qa@labelprefix\qa@prefix@rem
    remark~\ref{#1}%
  \else\ifx\qa@labelprefix\qa@prefix@ass
    assumption~\ref{#1}%
  \else\ifx\qa@labelprefix\qa@prefix@thm
    theorem~\ref{#1}%
  \else\ifx\qa@labelprefix\qa@prefix@sec
    section~\ref{#1}%
  \else\ifx\qa@labelprefix\qa@prefix@subsec
    section~\ref{#1}%
  \else\ifx\qa@labelprefix\qa@prefix@subsubsec
    section~\ref{#1}%
  \else\ifx\qa@labelprefix\qa@prefix@tab
    table~\ref{#1}%
  \else\ifx\qa@labelprefix\qa@prefix@fig
    figure~\ref{#1}%
  \else\ifx\qa@labelprefix\qa@prefix@eq
    equation~\ref{#1}%
  \else\ifx\qa@labelprefix\qa@prefix@app
    appendix~\ref{#1}%
  \else
    \ref{#1}%
  \fi\fi\fi\fi\fi\fi\fi\fi\fi\fi\fi\fi\fi\fi
}
\DeclareRobustCommand{\Cref}[1]{\qa@typedCref{#1}}
\DeclareRobustCommand{\cref}[1]{\qa@typedcref{#1}}
\makeatother

\microtypesetup{expansion=false}
\newcolumntype{L}[1]{>{\raggedright\arraybackslash}p{#1}}
\newcolumntype{Y}{>{\raggedright\arraybackslash}X}

\theoremstyle{definition}
\newtheorem{theorem}{Theorem}
\newtheorem{lemma}[theorem]{Lemma}
\newtheorem{proposition}[theorem]{Proposition}
\newtheorem{corollary}[theorem]{Corollary}
\newtheorem{definition}[theorem]{Definition}
\newtheorem{assumption}[theorem]{Assumption}
\newtheorem{remark}[theorem]{Remark}

\definecolor{qaBlue}{RGB}{40,90,160}
\definecolor{qaLightBlue}{RGB}{220,232,246}
\definecolor{qaRed}{RGB}{180,50,50}
\definecolor{qaLightRed}{RGB}{248,224,224}
\definecolor{qaGreen}{RGB}{40,130,70}
\definecolor{qaLightGreen}{RGB}{224,242,228}
\definecolor{qaOrange}{RGB}{200,120,30}
\definecolor{qaLightOrange}{RGB}{250,236,214}
\definecolor{qaPurple}{RGB}{110,70,150}
\definecolor{qaLightPurple}{RGB}{234,226,246}
\definecolor{qaDark}{RGB}{40,40,40}
\definecolor{qaLightGrey}{RGB}{238,238,238}

\tikzset{
qaBox/.style={
rectangle,
rounded corners=2pt,
draw=qaDark!55,
line width=0.5pt,
minimum height=9mm,
align=center,
inner xsep=5pt
},
qaSmallBox/.style={
rectangle,
rounded corners=2pt,
draw=qaDark!55,
line width=0.5pt,
minimum height=7mm,
align=center,
inner xsep=4pt,
font=\scriptsize
},
qaArrow/.style={
-{Stealth[length=2.2mm]},
draw=qaDark!60,
line width=0.6pt
},
qaDashedArrow/.style={
-{Stealth[length=2.2mm]},
draw=qaDark!55,
line width=0.6pt,
dashed
},
qaLabel/.style={
font=\scriptsize\itshape,
text=qaDark!55
},
qaPanel/.style={
rectangle,
rounded corners=3pt,
draw=qaDark!25,
fill=qaLightGrey!35,
inner sep=5mm
},
}

\DeclareMathOperator{\supp}{supp}
\DeclareMathOperator{\polylog}{polylog}

\DeclareMathOperator{\rank}{rank}

\providecommand{\Irev}{\ensuremath{\mathcal{I}_{\mathrm{rev}}}}
\providecommand{\own}{\operatorname{own}}

\providecommand{\Unif}{\operatorname{Unif}}

\title{When Does a Quantum Speedup Survive End-to-End?\\
\large Interface Admissibility for Topological Data Analysis}

\author{
  Pablo Herrero G\'omez\thanks{ORCID: \href{https://orcid.org/0009-0003-7361-4900}{0009-0003-7361-4900}; \texttt{pablo.herrero@ua.es}} \and
  Antonio Jimeno Morenilla\thanks{ORCID: \href{https://orcid.org/0000-0002-3789-6475}{0000-0002-3789-6475}; \texttt{jimeno@ua.es}} \and
  David Muñoz Hernández\thanks{ORCID: \href{https://orcid.org/0009-0008-4218-2945}{0009-0008-4218-2945}; \texttt{david.mhernandez@ua.es}} \and
  Higinio Mora Mora\thanks{ORCID: \href{https://orcid.org/0000-0002-8591-0710}{0000-0002-8591-0710}; \texttt{hmora@ua.es}}\\
  \normalsize University of Alicante, Alicante, Spain
}
\date{}

\begin{document}

\maketitle

\begin{abstract}
Primitive quantum speedups are interface-relative: they depend on the input
access used to run the primitive and on the output contract used to consume its
state or samples. This paper introduces a transcript-level admissibility
relation \(A_M\preceq_{\mathrm{int}}A_Q\), defined relative to the declared
implementation package of the quantum interface. It identifies which adaptive
classical access transcripts that same package licenses, with all setup,
transcript-generation, and precision overheads charged.

The main application is an operational audit for normalized-Betti estimation in
clique-complex TDA, separating three declared-interface regimes. Reversible
indexed simplex interfaces certify matched classical simplex sampling and local
Laplacian row access by evaluating their reversible routines on single
computational branches. Membership-based preparations induce a rejection route
of overhead \(\binom{n}{k+1}/|S_k|\). Abstract spectral or block-encoding
interfaces require an accompanying implementation package, transcript reduction,
or shared representation. Under the indexed certificate and interface closure,
the end-to-end cost is fixed by the imported estimator's spectral dependence on
the gap \(\gamma\); the concretely realized bounded-treewidth family already
admits exact \(\mathrm{poly}(n)\) classical Betti computation by rank over
\(\mathbb{Q}\). A low-rank separation supports the role of access and output
contracts.

\end{abstract}

\noindent\textbf{Keywords:} quantum advantage, access models, dequantization, topological data analysis, Betti numbers, output contracts

\section{Introduction}
\label{sec:introduction}

Primitive quantum speedups are statements relative to interfaces. A linear
algebra primitive may assume a block-encoding, QRAM-like loading, amplitude
access, or a state-preparation oracle. A TDA primitive may assume a simplex
register and a block-encoding of a combinatorial Laplacian. These assumptions
specify how the quantum primitive is run and what information about the
instance has already been made operationally available.

This paper asks when such a primitive speedup survives as applied end-to-end
advantage under explicit input and output interfaces. The answer is
interface-relative. If the declared quantum implementation also realizes a
matched classical route, an interface-closed baseline includes that route with
all declared overheads charged. Dense output contracts can also impose
extraction costs that shape end-to-end inheritance even when the primitive
remains fast.

The central object is the admissibility relation
\[
  A_M\preceq_{\mathrm{int}}A_Q .
\]
It says that a classical transducer, using only the declared implementation
package \(I_Q(x)\) for the quantum access model \(A_Q\), public parameters, and
its own randomness, can generate the adaptive transcripts of a classical access
model \(A_M\) with charged overhead. This is a condition on the declared
interface. For implemented packages, transcript reductions, or declared shared
representations, the relation supplies the charged classical sampler or access
transcript licensed by that interface.

The framework becomes operational through implementation-level certificates.
Once admissibility is certified, an interface-closed baseline includes the
matched classical route, and the classical end-to-end cost is at most the cost
of that route. The concrete work is to determine, for a declared implementation
package, which classical transcripts are generated by the same operational
resources as the quantum access.

For clique-complex TDA, \Cref{subsec:tda-audit} and
\Cref{thm:tda-interface-audit} give this determination as an audit of the
declared interface. An abstract spectral interface--a block-encoding,
state-preparation oracle, or spectral primitive--is classified as a spectral
declaration whose matched classical access is certified by an accompanying
implementation package, transcript reduction, or shared representation. A
membership-based simplex preparation induces a rejection route whose overhead
is the inverse clique-density factor \(\binom{n}{k+1}/|S_k|\). A reversible
indexed implementation, formalized in \Cref{def:rev-indexed-tda}, declares
reversible simplex-generation routines and reversible local transition/value
routines for the Laplacian; \Cref{thm:tda-indexed-collapse} derives the
matched classical simplex sampling and local-row transcripts by evaluating
those routines on computational-basis branches.

Combining the reversible indexed certificate with the imported estimator module
and interface closure places the matched classical route in the baseline at
overhead \(A_{\mathrm{idx}}=\operatorname{polylog}(N)\), so the end-to-end
comparison for additive normalized Betti estimation is fixed by the spectral
cost of the imported estimator in \(\gamma\). The realized indexed family of
bounded-treewidth complexes is classically tractable in \(\operatorname{poly}(n)\)
by exact rank computation, and \Cref{prop:tw-incompat} delineates the scope of
the indexed certificate. Exact Betti numbers, unnormalized Betti numbers,
homology-basis output, dense spectral output, abstract spectral declarations,
membership-only preparations with large clique-density overhead, and
hidden-algebraic regimes are evaluated by their own interface certificates or
independent classical lower bounds under the declared access model.

The remaining results support this interface view. A low-rank access separation
illustrates how entrywise access can face a localization barrier while
\(\ell_2\) sample-and-query access exposes the relevant mass directly. These
low-rank examples are used as supporting illustrations of access-relative
reducibility. The main applied certificate in this paper is the TDA interface
audit for normalized-Betti estimation. Output-contract lifting supplies the
complementary boundary: dense classical outputs, such as a full QLSA solution
vector, impose extraction costs beyond those of compact scalar-observable
contracts.

Section~\ref{sec:related} positions the paper relative to end-to-end
frameworks, dequantization, quantum TDA, and output-extraction bounds.
Section~\ref{sec:results} defines the interface-admissibility framework and
states the baseline-closure principle. Section~\ref{sec:applications} applies
the framework to normalized-Betti estimation in clique-complex TDA and
illustrates it on a low-rank access separation.
Appendix~\ref{sec:methods} contains the low-rank lower-bound proofs and
output-bound details.

\section{Related Work and Positioning}
\label{sec:related}

The literature on applied quantum advantage already treats access assumptions
as part of the computational claim. Analyses of quantum speedups emphasize that
data loading, oracle construction, hidden constants, finite precision, resource
regime, and output extraction can change an end-to-end comparison even when a
primitive subroutine is asymptotically fast \cite{aaronson_read_2015,
tang_quantum_2021}. Work on block-encodings and QRAM-like loading models makes
the same point from the input side: the declared interface is part of the
algorithmic object, not an implementation detail external to the cost model
\cite{giovannetti_quantum_2008,sunderhauf_block-encoding_2024}. The modelling
issue is especially visible in quantum machine learning, where data access and
state preparation are often stated as part of the task itself
\cite{biamonte_quantum_2017,schuld_supervised_2018}.

This paper is positioned inside that end-to-end viewpoint. It does not replace
resource estimates, oracle lower bounds, or application-specific cost models.
It adds a formal comparison rule for one recurring situation: when a classical
route is generated by the same declared implementation package that realizes
the quantum access, that route belongs in the interface-closed classical
baseline. The relation \(A_M\preceq_{\mathrm{int}}A_Q\) records this
transcript-level admissibility. Two claims may therefore share the same
primitive quantum subroutine while declaring different implementation packages,
and those packages can license different classical transcripts.

Quantum-inspired and dequantization results provide the closest technical
motivation for taking this interface distinction seriously. Sample-and-query
access can reproduce, or closely match, several low-rank quantum
linear-algebra primitives \cite{tang_quantum-inspired_2019,
gilyen_quantum-inspired_2018}. Later refinements sharpen the dependence on
rank, conditioning, sampling assumptions, and precision
\cite{chia_sampling-based_2022,bakshi_improved_2024}. These results operate on
declared sample-and-query access for low-rank structure. They show why an
applied comparison must say whether norm trees, sampling tables, unrank maps,
common representations, or reversible branch-evaluable routines are declared as
part of \(I_Q(x)\). If such objects are part of the declared quantum
implementation package, certain classical transcript models may factor through
that same package; when they do, the setup, preprocessing, storage, sampling,
arithmetic, precision conversion, and amplification costs are charged and the
route enters the interface-closed baseline. If the claim declares only an
abstract oracle, state-preparation primitive, or block-encoding, then the
abstract interface itself is what has been certified; additional transcript
data are needed before sample-and-query, entrywise, row, or sampling access can
be treated as matched classical access.

Topological data analysis is the main later application of this framework.
Quantum algorithms for Betti-number estimation typically organize the input
around simplex registers, membership predicates, boundary or Laplacian
operators, block-encodings, phase-estimation routines, or spectral
transformations \cite{lloyd_quantum_2016,mcardle_streamlined_2022}. Complexity
results for homological tasks and clique-complex formulations likewise make the
access model and the requested output central to the interpretation of the
claim \cite{gyurik_towards_2020,schmidhuber_complexity_2022}. The later TDA
application uses classical normalized-Betti and spectral-trace estimators as
imported modules. Path-integral Monte Carlo and related stochastic
trace-estimation approaches provide classical routines under appropriate
sampling and local-row access \cite{apers_simple_2022,akhalwaya_comparing_2024}.
The role of the framework is to determine when that access is induced by the
declared quantum implementation package and therefore belongs to the
interface-closed baseline.

Output contracts form the complementary side of the end-to-end comparison.
Producing a dense classical vector, a full distribution, or a tomographic
description is operationally different from estimating a scalar observable or
generating samples. The QLSA/HHL family is the standard calibration point: a
primitive that prepares a quantum state can have compact cost for scalar
observables while a dense classical description of the state carries extraction
or write-out costs \cite{harrow_quantum_2009,aaronson_read_2015}. Tomography
and low-rank tomography impose dimension-dependent sample requirements under
standard reconstruction contracts \cite{haah_sample-optimal_2017,
yuen_improved_2023}, while classical-shadow bounds apply to suitable
observable families rather than dense reconstruction
\cite{huang_predicting_2020}. The framework below keeps this distinction
explicit through the output contract \(O\).

The original material in this manuscript is the interface-admissibility layer
used to audit end-to-end comparisons. It defines a transcript-level relation
\(A_M\preceq_{\mathrm{int}}A_Q\), defines interface-closed baselines that
include charged matched classical routes, and gives an audit recipe for applied
claims: identify the declared implementation package, derive the matched
transcript models that factor through it, charge their overheads, and compare
the resulting route with the declared quantum end-to-end cost. Later sections
apply this recipe to normalized-Betti estimation in clique-complex TDA. Table
\ref{tab:framework-positioning} summarizes the resulting division between the
paper's interface layer and the estimator, dequantization, and readout modules
that it imports.

\begin{table}[t]
\centering
\small
\setlength{\tabcolsep}{3pt}
\begin{tabularx}{\textwidth}{@{}L{3.2cm}L{2.6cm}Y@{}}
\toprule
Component & Status here & Role in the paper \\
\midrule
Interface admissibility
\(A_M\preceq_{\mathrm{int}}A_Q\) &
Original definition &
Identifies classical transcript models generated by the same declared package
that realizes the quantum access. \\
Interface-closed baselines &
Original comparison rule &
Includes charged matched classical routes in the baseline infimum. \\
Audit recipe &
Original framework component &
Turns an applied claim into a check of declared implementation package,
matched transcripts, charged overhead, and end-to-end competitiveness. \\
Normalized-Betti estimators &
Imported modules &
Used later under the sampling and local-row access conditions supplied by the
classical TDA literature. \\
Quantum-inspired low-rank routines &
Imported modules &
Used when the relevant sample-and-query access is certified as matched. \\
Readout and tomography bounds &
Imported modules &
Calibrate the distinction between compact scalar contracts and dense output
contracts. \\
\bottomrule
\end{tabularx}
\caption{Narrative map of the framework. The table separates the original
interface-admissibility layer from estimator, dequantization, and readout
modules imported from the literature.}
\label{tab:framework-positioning}
\label{tab:proved-imported}
\end{table}

\section{Interface-Admissibility Framework}
\label{sec:results}

\subsection{Applied claims and output contracts}
\label{subsec:contract-model}

\begin{definition}[Applied claim]
\label{def:applied-claim}
For each nominal size $N$, an applied quantum-advantage claim is a tuple
\[
\mathsf{Claim}_N=(\mathcal C_N,A_Q,A_C,O,\mathcal B).
\]
Here $\mathcal C_N$ is a class of instances, $A_Q$ is the quantum access model,
$A_C$ is the declared classical access standard, $O$ is an output contract, and
$\mathcal B$ is a family of admissible classical baselines. The quantum
primitive, resource regime, and scaling convention are declared with the claim
when they enter the cost model, but they are not separate components of the
typed access/output comparison.
\end{definition}

\begin{definition}[Output contract]
\label{def:output-contract}
An output contract is a triple
\[
O=(\mathcal Y_N,\tau_N,\Delta_N),
\]
where $\mathcal Y_N$ is the response space, $\tau_N:\mathcal C_N\to\mathcal T_N$
is the ideal output map, and
\[
\Delta_N:\mathcal Y_N\times\mathcal T_N\to[0,\infty]
\]
is an error functional. An algorithm satisfies $O$ with accuracy $\varepsilon$
and failure probability $\delta$ if, for every $x\in\mathcal C_N$, its output
$Y$ obeys
\[
\Pr[\Delta_N(Y,\tau_N(x))\le \varepsilon]\ge 1-\delta.
\]
\end{definition}

\begin{definition}[Compact and dense output contracts]
\label{def:compact-dense-output}
An output contract is \(N\)-compact in a declared regime if it can be satisfied
using
\[
N_{\mathrm{shots}}(O,N,\varepsilon,\delta)
\le
\operatorname{polylog}(N)
\operatorname{poly}(1/\varepsilon,\log(1/\delta))
\]
measurement transcripts, samples, or repetitions, together with postprocessing
bounded by the same kind of factors. A contract is \(N\)-dense if satisfying it
requires producing or reconstructing \(N^{\Omega(1)}\) classical degrees of
freedom under the declared metric. Dense output requirements are charged as
part of the end-to-end comparison.
\end{definition}

The quantum end-to-end cost is written as
\begin{equation}
\label{eq:quantum-e2e}
T_Q^{\mathrm{E2E}}(N,\varepsilon,\delta)
=
\rho_R(N,\varepsilon)
N_{\mathrm{shots}}(O,N,\varepsilon,\delta)
T_{\mathrm{cycle}}(N,\varepsilon),
\end{equation}
where
\begin{equation}
\label{eq:cycle-cost}
T_{\mathrm{cycle}}(N,\varepsilon)
=
T_{A_Q}(N,\varepsilon)
+
\alpha_{A_Q}(N,\varepsilon)T_Q^{\mathrm{prim}}(N,\varepsilon)
+
T_{\mathrm{meas}}(N,\varepsilon).
\end{equation}
The interface-closed classical comparison is
\begin{equation}
\label{eq:classical-e2e}
T_C^{\mathrm{E2E}}(N,\varepsilon,\delta)
=
\inf_{B\in\mathcal B_{\mathrm{int}}}
T_B^{A_C^{\mathrm{eff}}(A_C,A_Q),O}(\mathcal C_N,\varepsilon,\delta),
\end{equation}
where \(A_C^{\mathrm{eff}}\) and \(\mathcal B_{\mathrm{int}}\) are made
explicit below. Before closure, \(A_C\) denotes the declared classical access
standard; after closure, the baseline cost is evaluated over the effective
closed access family. The cost of a baseline includes its setup, access,
reduction, and final operation costs under the relevant access standard.

\subsection{Transcript-level interface admissibility}
\label{subsec:access-relative-reducibility}

\begin{definition}[Interface-matched access]
\label{def:interface-matched-access}
Let $\mathcal C_N$ be an instance class and let $x\in\mathcal C_N$. Let $A_Q$
be an abstract quantum access model, and let $I_Q(x)$ denote the declared
implementation package used by the applied claim to realize $A_Q$ on $x$. This
package may be a data structure, loading routine, oracle construction,
state-preparation circuit, unrank or sampling routine, or declared common
representation. Let $A_M$ be a candidate classical access model, with ideal
adaptive transcript distribution for $A_M(x)$.

We write
\[
A_M\preceq_{\mathrm{int}} A_Q,
\]
with overhead profile $\Gamma_M$, if there exists a classical transducer $T_M$
such that, for every $x\in\mathcal C_N$, $T_M$ uses only $I_Q(x)$, public
parameters, and its own randomness to generate valid adaptive transcripts of
$A_M(x)$. The charged cost $\Gamma_M$ includes setup, preprocessing, storage,
query and response generation, sampling, arithmetic, precision conversion,
verification, and failure amplification.

The transducer obeys an information restriction: its transcript distribution
must factor through \(I_Q(x)\). If two instances induce the same declared
implementation package, or packages indistinguishable within the declared
precision, then the generated \(A_M\)-transcript distributions must be
identical, or indistinguishable within the corresponding transcript precision.
The transducer may not use raw-instance access, hidden advice, uncharged
preprocessing, or any stronger representation of the instance than the one
declared to realize \(A_Q\).

Approximate transcript generation is treated by charging the transcript
precision conversion and failure amplification into $\Gamma_M$; unless another
metric is declared, transcript distance means total variation distance over the
full adaptive transcript.

The relation is certified in three ways.
\begin{enumerate}[label=(\alph*)]
\item \label{case:match-abstract-oracle}\emph{Abstract-oracle case.} An abstract
quantum oracle, abstract state-preparation oracle, or abstract block-encoding
is an abstract declaration. A matched access certificate is supplied by an
implementation package, transcript reduction, or shared representation with
charged overhead.
\item \label{case:match-data-structure}\label{case:match-oracle-transcript}
\emph{Implemented-interface case.} If the claim declares an implementation
package, matching is certified by giving a charged transducer from that package
to $A_M$-transcripts.
\item \label{case:match-declared-common}\emph{Declared-common-model case.} If
the applied model grants a common interface $G(x)$ from which both $A_Q$ and
$A_M$ are obtained with charged costs, then matching is certified through
$G(x)$.
\end{enumerate}
For example, a norm tree used for state preparation realizes
\(\ell_2\) sample-and-query access, and an unrank map for a uniform simplex
register realizes a classical uniform simplex sampler. Implemented
block-encodings certify sample-and-query access, row access, entry access, or
uniform sampling through the declared implementation, transcript reduction, or
shared data structure that generates the corresponding transcripts.
\end{definition}

\begin{remark}[Interface matching as admissibility]
\label{rem:interface-matching-admissibility}
Interface matching is an admissibility relation for baseline comparisons. It
states when the declared input interface of the quantum algorithm licenses a
charged classical route using \(A_M\).
\end{remark}

\subsection{Interface-closed baselines}
\label{subsec:interface-closed-baselines}

\begin{definition}[Interface-closed baseline family]
\label{def:interface-closed-baseline}
The baseline family $\mathcal B$ is interface-closed for the claim
$\mathsf{Claim}_N$ if, for every $A_M\preceq_{\mathrm{int}}A_Q$, its closed
version $\mathcal B_{\mathrm{int}}$ contains the classical algorithms that use
$A_M$ whenever their setup, access, reduction, and final operation costs are
charged. An interface-closed comparison includes the interface-matched
classical routes whose costs are charged in this sense.
\end{definition}

\paragraph{Effective closed access family.}
\label{def:effective-classical-access}
Identifying a single declared access model with the corresponding singleton
family, define
\[
A_C^{\mathrm{eff}}(A_C,A_Q)
:=
A_C\cup
\{\,A_M:\; A_M\preceq_{\mathrm{int}}A_Q\,\}.
\]
Here $A_C$ is the declared pre-closure classical access standard, while
$A_C^{\mathrm{eff}}(A_C,A_Q)$ is the effective family of access models admitted
in an interface-closed comparison. In such comparisons, occurrences of $A_C$ in
baseline costs are understood as this effective closed access family, and
$\mathcal B_{\mathrm{int}}$ denotes the baseline family closed under those
admissible accesses with all costs charged.

\paragraph{Implementation packages determine admissibility.}
The same primitive-level quantum behavior can receive different end-to-end
verdicts depending on the declared implementation package. A package containing
a norm tree, sampler, unrank map, or common representation can license concrete
classical transcripts, and the admissible baseline is determined by the
transcripts generated by the declared package.

\begin{lemma}[Interface-closure principle]
\label{lem:interface-closure-principle}
\label{thm:matched-interface-collapse}
\label{prop:closure-consequence}
Let
\[
\mathsf{Claim}_N=(\mathcal C_N,A_Q,A_C,O,\mathcal B)
\]
be an applied claim, and let \(\mathcal B_{\mathrm{int}}\) be
interface-closed for this claim. Suppose that
\(A_M\preceq_{\mathrm{int}}A_Q\) with charged overhead \(\Gamma_M\), and that a
classical route using \(A_M\) satisfies the output contract \(O\) with total
charged cost
\[
T_M(N,\varepsilon,\delta).
\]
Then
\[
T_C^{\mathrm{E2E}}(N,\varepsilon,\delta)
\le
T_M(N,\varepsilon,\delta).
\]
Consequently, if
\[
T_M(N,\varepsilon,\delta)
\le
T_Q^{\mathrm{E2E}}(N,\varepsilon,\delta)
\operatorname{polylog}(N)
\operatorname{poly}(1/\varepsilon,\log(1/\delta)),
\]
then the interface-closed comparison places the classical end-to-end cost
within the displayed polylogarithmic and precision factor of the quantum
end-to-end cost in \(N\) for that claim.
\end{lemma}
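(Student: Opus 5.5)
The argument is essentially definitional: unwind the definitions of the effective closed access family and of an interface-closed baseline family, then use monotonicity of the infimum in \eqref{eq:classical-e2e}.

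\textbf{Step 1: the route is in the baseline family.} Given \(A_M\preceq_{\mathrm{int}}A_Q\) with overhead \(\Gamma_M\), the definition of the effective closed access family gives
\[
A_M\in A_C^{\mathrm{eff}}(A_C,A_Q).
\]
Let \(B_M\) denote the classical route using \(A_M\). Its cost \(T_M\) is, by hypothesis, the total charged cost, including setup, access, reduction and final operation. I read this as including the transducer overhead \(\Gamma_M\), which is needed to realize the \(A_M\)-transcripts from \(I_Q(x)\). Since \(\mathcal B_{\mathrm{int}}\) is interface-closed, \Cref{def:interface-closed-baseline} then places \(B_M\) in \(\mathcal B_{\mathrm{int}}\).

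\textbf{Step 2: the route's cost is its charged cost.} I need to check that the cost functional in \eqref{eq:classical-e2e}, evaluated at \(B_M\), equals or is bounded by \(T_M\). The transducer \(T_M\) produces valid adaptive \(A_M(x)\)-transcripts, possibly only within total variation distance with charged amplification. So \(B_M\), run on these generated transcripts, meets \(O\) with parameters \((\varepsilon,\delta)\) for every \(x\in\mathcal C_N\). This holds once the transcript precision loss is folded into \(\Gamma_M\), which \Cref{def:interface-matched-access} explicitly requires. This is the step I expect to need the most care. If the generated transcripts are only \(\eta\)-close in total variation, the failure probability rises to \(\delta+\eta\). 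I would first handle this by absorbing \(\eta\) into the declared failure budget, with the amplification already charged in \(\Gamma_M\) and hence in \(T_M\). With that done,
\[
T_{B_M}^{A_C^{\mathrm{eff}},O}(\mathcal C_N,\varepsilon,\delta)\le T_M(N,\varepsilon,\delta).
\]

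\textbf{Step 3: take the infimum.} Since \(T_C^{\mathrm{E2E}}\) is the infimum over \(\mathcal B_{\mathrm{int}}\ni B_M\), it is at most the value at \(B_M\). This gives
\[
T_C^{\mathrm{E2E}}(N,\varepsilon,\delta)\le T_M(N,\varepsilon,\delta).
\]

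\textbf{Step 4: the consequence.} Chaining the inequality from Step 3 with the assumed bound \(T_M\le T_Q^{\mathrm{E2E}}\operatorname{polylog}(N)\operatorname{poly}(1/\varepsilon,\log(1/\delta))\) gives
\[
T_C^{\mathrm{E2E}}\le T_Q^{\mathrm{E2E}}\operatorname{polylog}(N)\operatorname{poly}(1/\varepsilon,\log(1/\delta)).
\]
This is exactly the stated closeness within the polylogarithmic and precision factor.
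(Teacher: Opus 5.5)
Your proposal is correct and matches the paper's proof: interface closure places the charged matched route in \(\mathcal B_{\mathrm{int}}\), so the infimum defining \(T_C^{\mathrm{E2E}}\) is at most \(T_M\), and the consequence follows by chaining with the assumed competitiveness bound. Your Step~2 treatment of total-variation error is extra care the paper does not spell out, and it is not needed: the hypothesis that the route satisfies \(O\) at total charged cost \(T_M\), with precision conversion and amplification charged into \(\Gamma_M\), already covers it.
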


\begin{proof}
By interface closure, the matched classical route is an admissible member of
the baseline comparison once its setup, access, reduction, and operation costs
are charged. Since $T_C^{\mathrm{E2E}}$ is an infimum over
$\mathcal B_{\mathrm{int}}$ with access
$A_C^{\mathrm{eff}}(A_C,A_Q)$, it is at most the cost $T_M$ of this
route. If the displayed competitiveness condition on \(T_M\) holds, the ratio
\(T_C^{\mathrm{E2E}}/T_Q^{\mathrm{E2E}}\) is bounded by a polylogarithmic
factor in \(N\) times polynomial precision and amplification factors.
\end{proof}

This principle is a comparison rule under declared interfaces: its use requires
an explicit admissibility certificate \(A_M\preceq_{\mathrm{int}}A_Q\), an
explicit accounting of \(\Gamma_M\), and an interface-closed baseline.

\subsection{Audit recipe}
\label{subsec:audit-recipe}

Given an applied claim
\(\mathsf{Claim}_N=(\mathcal C_N,A_Q,A_C,O,\mathcal B)\), the audit proceeds as
follows.
\begin{enumerate}[label=\textbf{Step \arabic*.}]
\item Identify the declared quantum implementation package \(I_Q(x)\).
\item Determine which classical transcript models \(A_M\) factor through
\(I_Q(x)\), public parameters, and private randomness.
\item Charge \(\Gamma_M\) explicitly, including setup, preprocessing, storage,
query generation, precision conversion, arithmetic, and amplification.
\item Close the classical baseline over the effective access family
\[
A_C^{\mathrm{eff}}(A_C,A_Q)
=
A_C\cup\{A_M:A_M\preceq_{\mathrm{int}}A_Q\}.
\]
\item Check whether the resulting matched route is competitive with
\(T_Q^{\mathrm{E2E}}\) under the requested output contract.
\item If the route is not competitive, record the responsible parameter,
missing interface component, or output requirement.
\end{enumerate}

The later TDA application applies this recipe to simplex generation, local
Laplacian transcripts, clique-density overheads, and normalized-Betti output
contracts.

\section{Main Applications of the Framework}
\label{sec:applications}

\Cref{sec:results} gave a general audit framework for applied claims: specify
the output contract, identify the declared quantum implementation package,
derive the classical transcript models that factor through that package, and
close the classical baseline over those matched accesses with all overheads
charged. This section applies the framework to concrete settings. The first
application is the main one: additive normalized Betti estimation for
clique-complex topological data analysis. The audit decides which classical
transcript models are licensed by the declared quantum TDA implementation
package.

The TDA audit separates three interface classes: abstract spectral or
block-encoding declarations; membership-based simplex preparations; and
reversible indexed simplex/Laplacian implementations. These classes lead to
different interface-closed baselines. Section~\ref{sec:lowrank} illustrates the
same access-relative principle on a purely low-rank family.

\subsection{Application I: TDA Interface Audit for Normalized Betti Estimation}
\label{subsec:tda-audit}
\label{subsec:tda-indexed-collapse}

\subsubsection{Task, promise, and output contract}
\label{subsubsec:tda-task-contract}

Let \(G\) be a graph on \(n\) vertices and let \(X(G)\) be its clique complex.
For a fixed dimension \(k\), let \(S_k\) denote the set of \(k\)-simplices and
write \(N=|S_k|\). We encode a \(k\)-simplex as a sorted \((k+1)\)-tuple of
vertices. Let \(\Delta_k\succeq 0\) be the \(k\)-th combinatorial Laplacian,
with the normalization convention declared by the algorithm or estimator.

\begin{definition}[Typed normalized-Betti TDA claim]
\label{def:tda-class}
For parameters \(N,\gamma,\Lambda\) with \(0<\gamma\le\Lambda\), let
\[
  \mathcal{C}^{\mathrm{TDA}}_{N,\gamma,\Lambda}
  =
  \bigl\{\,X(G)\ \text{clique complex}:\ |S_k|=N,\ 
  \operatorname{spec}(\Delta_k)\subseteq\{0\}\cup[\gamma,\Lambda]\,\bigr\}.
\]
The output contract is additive normalized Betti estimation: output
\(\widehat{\beta}\in\mathbb R\) such that
\[
\Pr\!\left[
\left|\widehat{\beta}-\frac{\beta_k}{|S_k|}\right|
\le \varepsilon
\right]\ge 1-\delta .
\]
Equivalently, in the notation of \Cref{def:output-contract}, the ideal output
map is \(\tau_N(X(G))=\beta_k/|S_k|\), and the error functional is
\[
\Delta_N(\widehat{\beta},\beta_k/|S_k|)
=
\left|\widehat{\beta}-\frac{\beta_k}{|S_k|}\right|.
\]
The spectral promise is part of the typed applied claim and is recorded as an
explicit declared condition.
\end{definition}

The advantage scale for clique-complex TDA is the vertex count \(n\) together
with the spectral resolution \(1/\sqrt{\gamma}\). With \(|S_k|=N\), a
\(\operatorname{poly}(N)\) bound equals a \(\operatorname{poly}(n)\) bound when
\(N=\operatorname{poly}(n)\), and the two scales separate when
\(N=2^{\Theta(n)}\), where advantage means a quantum routine in
\(\operatorname{poly}(n)\) against a classical cost of \(n^{\omega(1)}\).

\subsubsection{The no-free-simplex-register principle}
\label{subsubsec:no-free-simplex-register}

\begin{theorem}[TDA interface trichotomy for normalized Betti estimation]
\label{thm:tda-interface-trichotomy}
\label{thm:tda-interface-audit}
\label{cor:tda-interface-audit}
For additive normalized Betti estimation on clique complexes under
\Cref{def:tda-class}, the declared quantum TDA implementation package
determines the following interface-audit outcomes.

\begin{enumerate}[label=(\roman*)]
\item \emph{Abstract spectral/block-encoding interface.}
An abstract block-encoding, state-preparation oracle, simplex-register oracle,
or spectral primitive for \(\Delta_k\) is a spectral declaration. Matched
classical simplex sampling, local Laplacian row access, and normalization
access are certified when the declaration includes an implementation package,
transcript reduction, or shared representation generating those transcripts
with charged overhead.

\item \emph{Membership-based simplex interface.}
If the implementation prepares the simplex register by starting from the
uniform distribution or superposition over all \((k+1)\)-subsets of \([n]\) and
applying a reversible clique-membership predicate, then the matched classical
route induced by that predicate is rejection sampling. Its expected overhead
per accepted simplex is
\[
R_k(G)=\frac{\binom{n}{k+1}}{|S_k|}.
\]
This route is polylogarithmic in \(N=|S_k|\) only when \(R_k(G)\) is
polylogarithmic, unless additional generative structure is declared and
charged.

\item \emph{Reversible indexed simplex/Laplacian interface.}
If the implementation declares a reversible indexed simplex generator, local
reversible transition/value routines for \(\Delta_k\), and
normalization/counting information for \(N=|S_k|\), then single-branch
evaluation certifies a matched classical access model consisting of
near-uniform simplex sampling, local row transcripts of \(\Delta_k\), and
normalization information. In this case
\[
A^{\mathrm{TDA}}_{\mathrm{simp,row},N}
\preceq_{\mathrm{int}} A_Q .
\]
\end{enumerate}
\end{theorem}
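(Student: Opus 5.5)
The proof goes case by case. Each case amounts to exhibiting, or explaining the absence of, a charged classical transducer \(T_M\) as required by \Cref{def:interface-matched-access}. I will therefore organize it as three certificate arguments, one per interface class. In each, I check the information restriction: the transcript distribution must factor through \(I_Q(x)\), public parameters, and private randomness.

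\emph{Case (i).} This case is essentially definitional. Under \Cref{def:interface-matched-access}\ref{case:match-abstract-oracle}, an abstract block-encoding or state-preparation oracle for \(\Delta_k\) is an abstract declaration. The admissibility relation is certified only by a transducer built from an implementation package, a transcript reduction, or a common representation \(G(x)\). So I will show two directions. First, if such a package or reduction is supplied with charged cost, then cases \ref{case:match-data-structure} or \ref{case:match-declared-common} yield \(A^{\mathrm{TDA}}_{\mathrm{simp,row},N}\preceq_{\mathrm{int}}A_Q\) directly. Second, absent such data, the abstract declaration alone does not fix a transcript generator, so no certificate is asserted. No lower bound is claimed here; the content is the classification.

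\emph{Case (ii).} I make the transducer explicit. Sample a uniform \((k+1)\)-subset of \([n]\) using public parameters and private randomness. Evaluate the declared reversible clique-membership predicate on that single computational-basis input; the predicate is part of \(I_Q(x)\), and a reversible circuit on a basis state is a classical computation of the same gate count. Accept if the predicate returns true. Conditioned on acceptance, the output is exactly uniform on \(S_k\). The acceptance probability is \(|S_k|/\binom{n}{k+1}\), so the number of trials is geometric with mean \(R_k(G)\). I will charge one predicate evaluation per trial, and failure amplification via a truncated trial count of \(O(R_k\log(1/\delta))\). The ``only when'' clause follows immediately: the expected cost per accepted sample is \(\Theta(R_k(G))\) times the predicate cost, so the route is \(\operatorname{polylog}(N)\) exactly when \(R_k\) is, unless further declared generative structure changes the transducer.

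\emph{Case (iii).} This case is the substantive one and the main obstacle. The plan is single-branch evaluation. The reversible indexed generator is a classical reversible circuit \(U\) mapping an index register, drawn from a declared near-uniform index distribution, to a simplex encoding. Running \(U\) on a uniformly random basis index produces a sample whose distribution equals the measurement distribution of the quantum register. That distribution is near-uniform on \(S_k\) within the declared precision, and its TV distance is charged to \(\Gamma_M\). Likewise, the local transition routine, on input \((\sigma,j)\), returns the \(j\)-th neighbor \(\sigma'\) of \(\sigma\) in \(\Delta_k\). The value routine returns the entry \((\Delta_k)_{\sigma\sigma'}\), including the diagonal degree term. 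Evaluating both on basis states therefore produces row transcripts adaptively, and the declared counting information supplies \(N\).

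The delicate points are threefold. First, the transcripts must depend only on \(I_Q(x)\), which holds because every step calls only declared routines. Second, garbage or ancilla registers must be handled: on a basis input they stay classical, so uncomputation is unnecessary. Third, precision conversion from fixed-point amplitudes or values to classical entries must be charged at \(\operatorname{polylog}(N)\) overhead per query. Combining these three points gives the transducer and hence \(A^{\mathrm{TDA}}_{\mathrm{simp,row},N}\preceq_{\mathrm{int}}A_Q\).
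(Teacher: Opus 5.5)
Your proposal is correct and follows the paper's proof. Part (i) is read off from the abstract-oracle clause of \Cref{def:interface-matched-access}; part (ii) is the rejection-sampling transducer of \Cref{prop:clique-density-obstruction} with acceptance probability \(|S_k|/\binom{n}{k+1}\); and part (iii) is single-branch evaluation of the declared generator, neighbor/value routines and counting data, as in \Cref{thm:tda-indexed-collapse}, which the paper cites while you re-derive it inline. Two of your side claims are not needed for admissibility: that the classical sample reproduces the quantum register's measurement distribution, and that precision conversion costs \(\operatorname{polylog}(N)\). The argument only uses the declared \(\eta_{\mathrm{samp}}\)-closeness of \(G_{\mathrm{simp}}\), charges \(O(d_\Delta T_{\mathrm{row}}+b)\) per row, and requires no bound on the charged overhead.
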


\begin{proof}
Part (i) is the abstract-oracle case of
\Cref{def:interface-matched-access}: the matched classical transcript
distribution for simplex sampling, local rows, or normalization is certified by
the declared implementation package, transcript reduction, or shared
representation through which it factors. Part (ii) follows from
\Cref{prop:clique-density-obstruction}. The declared membership
predicate licenses rejection sampling from the ambient set of
\((k+1)\)-subsets, with acceptance probability
\(|S_k|/\binom{n}{k+1}\). Part (iii) follows from
\Cref{thm:tda-indexed-collapse}: evaluating the declared reversible
simplex-generation and local Laplacian routines on computational-basis
branches generates the matched sampling, row, and counting transcripts with all
overheads charged.
\end{proof}

\paragraph{Abstract-interface clause.}
The first branch is the information restriction in
\Cref{def:interface-matched-access}: an abstract block-encoding or
state-preparation oracle certifies the declared abstract interface, and a
matched sampler or local-row transcript is certified by the additional declared
implementation information that generates the corresponding transcript.

\subsubsection{Reversible indexed implementation and matched local access}
\label{subsubsec:rev-indexed-support}

\begin{definition}[Reversible indexed TDA implementation]
\label{def:rev-indexed-tda}
A declared TDA implementation package \(I_Q(x)\) is a
\emph{reversible combinatorial indexed implementation}, written
\(I_Q(x)\in\mathcal I_{\mathrm{rev}}\), if it realizes the declared quantum
access \(A_Q\) and contains the following charged, gate-level objects.

\begin{enumerate}[label=(\roman*)]
\item The integer \(N=|S_k|\), or a declared counting/normalization routine with
charged cost \(T_{\mathrm{count}}\).

\item A reversible classical circuit \(U_{\mathrm{gen}}\), of size
\(T_{\mathrm{gen}}\), used by the quantum implementation to generate the
\(k\)-simplex register. On computational-basis inputs it implements a map
\[
G_{\mathrm{simp}}\colon \mathcal R\to S_k
\]
from a declared finite seed space \(\mathcal R\) to \(k\)-simplices. In the
exact indexed case, \(\mathcal R=[N]\) and \(G_{\mathrm{simp}}\) is a bijection
\([N]\to S_k\). More generally, when \(r\sim \mathrm{Unif}(\mathcal R)\), the
output distribution of \(G_{\mathrm{simp}}(r)\) is declared to be
\(\eta_{\mathrm{samp}}\)-close in total variation distance to uniform over
\(S_k\).

\item Reversible classical circuits \(O_{\mathrm{nb}}\) and \(O_{\mathrm{val}}\),
used by the declared Laplacian block-encoding, which generate local transition
data for \(\Delta_k\) on computational-basis inputs. Given a simplex \(\sigma\)
and an index \(\ell\le d_\Delta\), \(O_{\mathrm{nb}}\) returns the \(\ell\)-th
declared local neighbor candidate \(\tau_\ell(\sigma)\). Given
\((\sigma,\tau)\), \(O_{\mathrm{val}}\) returns an approximation to
\((\Delta_k)_{\sigma\tau}\) to \(b\) bits. The number of declared local neighbor
candidates is bounded by \(d_\Delta\), and the charged per-neighbor evaluation
cost is \(T_{\mathrm{row}}\).

\item The public parameters \(n,k\), the declared graph-access representation
used by the quantum implementation, and all arithmetic and precision conventions
needed to evaluate the above routines.
\end{enumerate}

The definition grants the reversible routines used coherently by
the declared quantum implementation. The content of the theorem below is that
single-branch evaluation of those routines induces the matched classical access.
\end{definition}

\begin{lemma}[Single-branch evaluation]
\label{lem:single-branch}
Let \(C\) be a circuit composed of reversible classical gates acting on
computational-basis states. Then \(C\) induces a deterministic map on basis
strings, and its value on any basis input can be computed classically by
simulating the gates once, with time \(O(|C|)\) up to the declared arithmetic
cost.
\end{lemma}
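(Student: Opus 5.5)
The plan is to argue by induction on the number of gates. First I will fix the gate model: \(C=g_m\circ\cdots\circ g_1\), where each \(g_i\) is a reversible classical gate, such as NOT, CNOT, Toffoli, or a declared reversible arithmetic block. Each such gate acts on a constant (or declared) number of wires, and on computational-basis states it acts by a permutation \(\pi_i\) of bit strings. The precise claim is that \(g_i\ket{s}=\ket{\pi_i(s)}\) for every basis string \(s\), with no phases and no superposition produced. This holds because a reversible classical gate is, by definition, the permutation matrix associated with a bijection on its local wires, extended by the identity on the remaining wires.

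Next I will compose. Induction on \(m\) gives \(C\ket{s}=\ket{\pi_m\circ\cdots\circ\pi_1(s)}\), so \(C\) maps basis states to basis states. The induced map \(\pi_C:=\pi_m\circ\cdots\circ\pi_1\) is therefore a well-defined deterministic bijection on strings. This is the ``single branch'' content: a basis input stays a single basis state through every layer, so the output is one basis string rather than a distribution. By linearity, the coherent action on a superposition \(\sum_s\alpha_s\ket{s}\) is \(\sum_s\alpha_s\ket{\pi_C(s)}\). This shows that the classical map is exactly the branch function used coherently by the quantum implementation.

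For the cost bound, the classical simulation keeps the current bit string in memory and applies \(\pi_i\) for \(i=1,\dots,m\). Each step reads and rewrites the \(O(1)\) wires touched by \(g_i\), which costs \(O(1)\) per elementary gate. For declared arithmetic blocks, the step instead costs the declared arithmetic cost of that block. Summing over the gates gives time \(O(|C|)\) up to the arithmetic convention. Ancilla wires are initialized to the declared constant basis values, so they are part of the basis input and add no cost beyond their width.

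I expect the only subtle point to be the gate model rather than any calculation. Gates with relative phases, such as a controlled-\(Z\) or a Hadamard, would break the argument, because a phase gate still preserves the single-branch structure but a Hadamard does not. I will therefore state explicitly that ``reversible classical gate'' means a permutation matrix on basis states. If the definition allows diagonal phases, I will note that they change only amplitudes and not the basis string, so the deterministic map \(\pi_C\) and the cost bound are unchanged.
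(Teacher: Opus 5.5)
Your proof is correct and matches the paper's argument: each reversible classical gate sends basis states to basis states, so the composition is a deterministic map on strings, and one sequential classical pass evaluates it. Your induction and per-gate cost accounting just spell out the paper's three-line proof. (One small slip in your last paragraph: a controlled-\(Z\) is diagonal, so like any phase gate it does not break the single-branch structure; only non-permutation, non-diagonal gates such as the Hadamard do, and the hypothesis excludes those anyway.)
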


\begin{proof}
Each reversible classical gate maps computational-basis states to
computational-basis states. Their composition therefore defines a deterministic
map on basis strings. Evaluating the gates sequentially on one input string
computes that map with one classical pass through the circuit.
\end{proof}

\begin{lemma}[Smoothed zero-indicator trace under a spectral promise]
\label{lem:smoothed-zero-indicator-betti}
Let \(\Delta_k\succeq 0\) and suppose
\[
\operatorname{spec}(\Delta_k)\subseteq \{0\}\cup[\gamma,\Lambda].
\]
Let \(f_\gamma\) be a bounded smoothed zero-indicator satisfying
\(f_\gamma(0)=1\) and
\[
|f_\gamma(\lambda)|\le \eta
\qquad\text{for all }\lambda\in[\gamma,\Lambda].
\]
Then
\[
\left|
\frac{1}{|S_k|}\operatorname{tr} f_\gamma(\Delta_k)
-
\frac{\beta_k}{|S_k|}
\right|
\le \eta .
\]
\end{lemma}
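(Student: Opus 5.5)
The plan is to diagonalize \(\Delta_k\) and compare the two traces eigenvalue by eigenvalue. Since \(\Delta_k\succeq 0\) is a real symmetric matrix on the \(N=|S_k|\)-dimensional chain space, choose an orthonormal eigenbasis with eigenvalues \(\lambda_1,\dots,\lambda_N\). The matrix function is \(f_\gamma(\Delta_k)\), so we have \(\operatorname{tr} f_\gamma(\Delta_k)=\sum_{i=1}^N f_\gamma(\lambda_i)\). By the combinatorial Hodge theorem, \(\beta_k=\dim\ker\Delta_k\), which is the multiplicity of the eigenvalue \(0\). I would state this as the standard identification, valid over \(\mathbb R\) for the declared normalization convention, since any positive rescaling of \(\Delta_k\) leaves the kernel unchanged.

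The next step is to split the sum according to the spectral promise \(\operatorname{spec}(\Delta_k)\subseteq\{0\}\cup[\gamma,\Lambda]\). The zero eigenvalues contribute \(\beta_k\cdot f_\gamma(0)=\beta_k\). Every other eigenvalue lies in \([\gamma,\Lambda]\), so each contributes at most \(\eta\) in absolute value. This gives
\[
\bigl|\operatorname{tr} f_\gamma(\Delta_k)-\beta_k\bigr|
\le (N-\beta_k)\,\eta\le N\eta .
\]
Dividing by \(|S_k|=N\) yields the claimed bound, and in fact the slightly sharper \((1-\beta_k/N)\eta\).

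The only step needing care is the identification \(\beta_k=\dim\ker\Delta_k\) under whatever normalization the estimator uses. If the normalization is a positive scalar or a positive-definite diagonal conjugation, the kernel dimension is preserved. In that case I would note that the promise is stated for the normalized operator actually used, so the argument goes through unchanged. Everything else is a direct spectral-calculus computation.
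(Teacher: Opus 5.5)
Your proof is correct and is the paper's argument. You split \(\operatorname{tr} f_\gamma(\Delta_k)\) over the spectrum: the kernel contributes exactly \(\beta_k f_\gamma(0)=\beta_k\), the remaining \(|S_k|-\beta_k\) eigenvalues in \([\gamma,\Lambda]\) contribute at most \(\eta\) each, and you then divide by \(|S_k|\). Your explicit appeal to the combinatorial Hodge identification \(\beta_k=\dim\ker\Delta_k\), and to its invariance under positive rescaling or invertible diagonal reweighting of the normalization, is a step the paper uses implicitly.
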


\begin{proof}
The zero eigenvalues of \(\Delta_k\) contribute exactly
\(\beta_k f_\gamma(0)=\beta_k\) to the trace. Each nonzero eigenvalue lies in
\([\gamma,\Lambda]\) and contributes at most \(\eta\) in absolute value. After
normalizing by \(|S_k|\), the nonzero contribution is at most
\(\eta(|S_k|-\beta_k)/|S_k|\le \eta\).
\end{proof}

\begin{theorem}[Indexed TDA matched-access certificate]
\label{thm:tda-indexed-collapse}
Let \(X(G)\) be a clique complex, let \(S_k\) be its set of \(k\)-simplices, and
let \(N=|S_k|\). Suppose the declared quantum TDA access \(A_Q\) is realized by
a package \(I_Q(x)\in\mathcal I_{\mathrm{rev}}\) in the sense of
\Cref{def:rev-indexed-tda}. Let \(A_M\) be the classical access model
consisting of:
\[
\text{sampling }\sigma\in S_k,\qquad
\text{local row transcripts of }\Delta_k,\qquad
\text{and normalization/counting information for }N.
\]
Then
\[
A_M\preceq_{\mathrm{int}} A_Q .
\]
More precisely, there exists a classical transducer \(T_M\), using only
\(I_Q(x)\), public parameters, and its own randomness, whose sampling marginal
is \(\eta_{\mathrm{samp}}\)-close in total variation distance to uniform on
\(S_k\), and whose row answers have the declared \(b\)-bit precision. The
charged overhead for one sampled simplex and one local row transcript is
\[
A_{\mathrm{idx}}(N,k,d_\Delta,b)
=
O\!\left(
T_{\mathrm{gen}}
+
d_\Delta T_{\mathrm{row}}
+
T_{\mathrm{count}}
+
\operatorname{polylog}(1/\eta_{\mathrm{samp}})
+
b
\right).
\]
In the exact unranking case, where \(G_{\mathrm{simp}}\colon[N]\to S_k\) is a
bijection and the transducer samples \(i\sim\mathrm{Unif}[N]\), the sampling
error is \(\eta_{\mathrm{samp}}=0\).
\end{theorem}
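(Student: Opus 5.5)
The proof is constructive: I would build the transducer \(T_M\) explicitly from the objects listed in \Cref{def:rev-indexed-tda}. Then I would check the three requirements of \Cref{def:interface-matched-access} in turn: valid transcript generation, the information restriction, and the charged overhead.

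\textbf{Sampling.} \(T_M\) draws a seed \(r\sim\Unif(\mathcal R)\) with its private randomness. It then evaluates \(U_{\mathrm{gen}}\) on the basis input \(\ket{r}\ket{0}\). By \Cref{lem:single-branch}, this single-branch evaluation returns \(G_{\mathrm{simp}}(r)\) deterministically in time \(O(T_{\mathrm{gen}})\). The output distribution is therefore exactly the pushforward of \(\Unif(\mathcal R)\) under \(G_{\mathrm{simp}}\). By the declared property (ii), this pushforward is \(\eta_{\mathrm{samp}}\)-close in total variation to uniform on \(S_k\). In the exact unranking case, \(\mathcal R=[N]\) and \(G_{\mathrm{simp}}\) is a bijection, so the pushforward of \(\Unif[N]\) is exactly uniform and \(\eta_{\mathrm{samp}}=0\).

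Drawing the seed uniformly costs \(O(\log|\mathcal R|)\) random bits. If \(|\mathcal R|\) is not a power of two, I would use rejection on bit strings truncated after \(O(\log(1/\eta_{\mathrm{samp}}))\) rounds. This is the source of the \(\operatorname{polylog}(1/\eta_{\mathrm{samp}})\) term, and its truncation error is absorbed into the same TV budget.

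\textbf{Rows and normalization.} Given a query \(\sigma\), \(T_M\) loops over \(\ell=1,\dots,d_\Delta\). For each \(\ell\) it evaluates \(O_{\mathrm{nb}}\) on \((\sigma,\ell)\) to obtain \(\tau_\ell(\sigma)\), then evaluates \(O_{\mathrm{val}}\) on \((\sigma,\tau_\ell(\sigma))\) to obtain the \(b\)-bit value. Both are single-branch evaluations by \Cref{lem:single-branch}, costing \(O(d_\Delta T_{\mathrm{row}})\) in total, plus \(O(b)\) to write out the precision-converted answers.

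The number \(N\) is either read from the package or computed once by the counting routine in time \(T_{\mathrm{count}}\). Since queries may be adaptive, I note that each answer is a deterministic function of the query and the package. Composing these answers with any adaptive classical querier therefore yields a valid adaptive transcript. Summing the sampling, row, normalization and output costs gives the stated bound on \(A_{\mathrm{idx}}\).

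\textbf{Information restriction.} The transducer only ever evaluates circuits contained in \(I_Q(x)\) and the public parameters \(n,k\), together with the declared arithmetic conventions, using its own coins. Its transcript distribution is therefore a function of \(I_Q(x)\) alone. Two instances with the same package, or with packages indistinguishable at the declared precision, yield identical transcripts, or transcripts indistinguishable at the corresponding precision. No raw-instance access is used, since the graph representation enters only through the declared routines.

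\textbf{Main obstacle.} The delicate point is that \(O_{\mathrm{nb}}\) and \(O_{\mathrm{val}}\) were declared as components used inside a block-encoding, which may act coherently on superpositions and use ancilla or garbage registers. I must argue that their restriction to a single basis input, with ancillas initialized to \(\ket{0}\), really returns \(\tau_\ell(\sigma)\) and the matrix entry rather than some uncomputed intermediate state. I would handle this by appealing directly to the clause in \Cref{def:rev-indexed-tda} that these maps are specified "on computational-basis inputs." Reading off the output register before any uncomputation step then gives the required values.
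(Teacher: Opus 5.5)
Your proposal is correct and follows the paper's proof essentially step for step. Both arguments use single-branch evaluation of \(U_{\mathrm{gen}}\) on a privately sampled seed, which gives an exactly uniform simplex in the bijective case. Both evaluate \(O_{\mathrm{nb}}\) and \(O_{\mathrm{val}}\) per neighbor at cost \(O(d_\Delta T_{\mathrm{row}}+b)\), and read or compute \(N\) at cost \(T_{\mathrm{count}}\). Both derive adaptivity and the information restriction from the fact that every answer is a fresh deterministic evaluation of package routines.

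Your one addition, truncated rejection sampling of the seed to account for the \(\operatorname{polylog}(1/\eta_{\mathrm{samp}})\) term, does not quite stay inside the declared budget. The paper simply charges that term to the package without explaining it. By data processing, your truncation error adds to \(\eta_{\mathrm{samp}}\) rather than being absorbed, giving roughly \(2\eta_{\mathrm{samp}}\). So either use untruncated Las Vegas rejection (exact, with expected \(O(1)\) rounds, which is also what the \(\eta_{\mathrm{samp}}=0\) exact case needs), or state the bound as \(\eta_{\mathrm{samp}}\) plus the truncation error.
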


This certificate identifies the classical transcripts that the declared package
\(I_Q(x)\in\mathcal I_{\mathrm{rev}}\) licenses; single-branch evaluation of
the declared routines (\Cref{lem:single-branch}) produces them with charged
overhead \(A_{\mathrm{idx}}\).

\begin{proof}
The transducer first generates the sampling part of the transcript. In the
exact indexed case, it samples \(i\sim\mathrm{Unif}[N]\) using its own
randomness and evaluates \(U_{\mathrm{gen}}\) on the computational-basis input
\(i\). By \Cref{lem:single-branch}, this costs \(T_{\mathrm{gen}}\). Since
\(G_{\mathrm{simp}}\) is a bijection from \([N]\) to \(S_k\), the resulting
simplex is exactly uniform over \(S_k\). In the approximate generative case,
the same branch evaluation gives the distribution declared for
\(G_{\mathrm{simp}}\), and the total variation error \(\eta_{\mathrm{samp}}\) is
charged as part of the implementation package.

For a local row query at a simplex \(\sigma\), the transducer evaluates the
declared local transition primitives on computational-basis inputs. For each
\(\ell\le d_\Delta\), it computes
\[
\tau_\ell(\sigma)=O_{\mathrm{nb}}(\sigma,\ell)
\]
and then computes the corresponding value
\[
O_{\mathrm{val}}(\sigma,\tau_\ell)
\]
to \(b\) bits. These are the same reversible classical subroutines used
coherently inside the declared Laplacian block-encoding; evaluating them on one
basis input produces a classical local-row transcript. The charged cost is
\(O(d_\Delta T_{\mathrm{row}}+b)\), with all finite-precision conversion
included.

The normalization information is obtained by reading the declared integer
\(N\), or by running the declared counting/normalization routine at charged cost
\(T_{\mathrm{count}}\). The theorem applies to implementations that declare
this normalization object and charge its cost.

The transducer is valid for adaptive transcripts because every future answer is
produced by a fresh evaluation of the same deterministic routines on the queried
basis input, together with fresh private randomness for new samples. Hence the
transcript distribution factors through \(I_Q(x)\). If two instances induce the
same declared implementation package up to the declared precision, the resulting
transcript distributions are identical up to that precision. The transducer
uses no raw-instance access, hidden advice, or uncharged preprocessing beyond
the declared implementation package. This is exactly the admissibility condition
\(A_M\preceq_{\mathrm{int}}A_Q\) from
\Cref{def:interface-matched-access}.
\end{proof}

\subsubsection{Imported normalized-Betti estimator and conditional competitiveness consequence}
\label{subsubsec:imported-betti-nogap}

\begin{assumption}[Imported normalized-Betti estimator module]
\label{ass:imported-betti-estimator}
\label{thm:imported-betti-estimator}
Let \(X(G)\) be a clique complex on \(n\) vertices, let \(S_k\) be the set of
\(k\)-simplices, and let \(N=|S_k|\). Let \(\Delta_k\succeq 0\) be the
\(k\)-th combinatorial Laplacian, normalized according to the convention used
by the estimator, and assume the spectral promise
\[
    \operatorname{spec}(\Delta_k)\subseteq \{0\}\cup[\gamma,\Lambda].
\]
Suppose an access model provides the following operations:
\begin{enumerate}[label=(\roman*)]
\item samples from a distribution on \(S_k\) that is
\(\eta_{\mathrm{samp}}\)-close in total variation distance to the uniform
distribution on \(S_k\);
\item local row transcripts for \(\Delta_k\), with at most \(d_\Delta\) declared
local neighbors per row and \(b\)-bit value precision;
\item the normalization information \(N=|S_k|\), or an equivalent charged
normalization routine.
\end{enumerate}
The classical normalized-Betti estimation literature supplies an estimator for
\[
    \frac{\beta_k}{|S_k|}
\]
with additive error \(\varepsilon\) and failure probability at most \(\delta\).
We denote its access and arithmetic operation count by
\[
    M_{\mathrm{Betti}}^{\mathrm{imp}}
    =
    M_{\mathrm{Betti}}^{\mathrm{imp}}
    (n,N,k,d_\Delta,\gamma,\varepsilon,\delta,b,\eta_{\mathrm{samp}}).
\]

In particular, one concrete imported instantiation is the path-integral Monte
Carlo estimator of Apers--Gribling--Sen--Szabo. For general simplicial
complexes, its running time is bounded by
\[
    n^{\,O\!\left(\frac{1}{\sqrt{\gamma}}
    \log\frac{1}{\varepsilon}\right)}
\]
up to polynomial factors in the remaining precision, amplification, and local
access costs. For clique complexes, their improved bound can be written as
\[
    \left(\frac{n}{\lambda_{\max}}\right)^{
    O\!\left(\frac{1}{\sqrt{\gamma}}\log\frac{1}{\varepsilon}\right)}
\]
under the normalization and eigenvalue conventions of that estimator, again
with the corresponding precision, amplification, and local-access costs
charged. The Monte Carlo comparison framework of Akhalwaya et al. provides an
additional clique-complex BNE module with explicit sample-complexity bounds for
the corresponding stochastic trace-estimation subroutines.

All polynomial approximation, variance, local-walk, finite-precision, sampling
bias, and amplification costs of the imported estimator are included in
\(M_{\mathrm{Betti}}^{\mathrm{imp}}\). The present paper uses these results as
imported estimator modules; its contribution is to determine when the local
sampling and row-access requirements of such modules are licensed by the
declared quantum implementation interface.
\end{assumption}

\begin{remark}[Parameter dependence of the imported estimator]
\label{rem:betti-parameter-dependence}
The quantity \(M_{\mathrm{Betti}}^{\mathrm{imp}}\) is intentionally kept as a
parameterized imported module in the interface theorem. The matched-access
audit records the estimator's declared dependence on the ambient parameters.
The end-to-end consequence depends on the declared regime: the relevant
parameters include the ambient vertex count \(n\), the simplex
count \(N=|S_k|\), the dimension \(k\), the local row bound \(d_\Delta\), the
spectral resolution \(1/\gamma\), the requested precision \(1/\varepsilon\),
the failure amplification \(\log(1/\delta)\), the value precision \(b\), and
the sampling error \(\eta_{\mathrm{samp}}\). The matched classical route is
competitive in the declared regimes where these quantities have the stated
polylogarithmic or polynomial dependence on \(N\).
\end{remark}

\begin{corollary}[Conditional matched-interface competitiveness criterion for normalized Betti estimation]
\label{cor:tda-conditional-nogap}
\label{cor:tda-normalized-betti-bound}
Assume \Cref{thm:tda-indexed-collapse},
\Cref{ass:imported-betti-estimator}, and an interface-closed baseline family.
Then the interface-closed classical end-to-end cost satisfies
\[
T^{\mathrm{E2E}}_C(N,\varepsilon,\delta)
\leq
A_{\mathrm{idx}}(N,k,d_\Delta,b)\,
M_{\mathrm{Betti}}^{\mathrm{imp}}
(n,N,k,d_\Delta,\gamma,\varepsilon,\delta,b,\eta_{\mathrm{samp}})
+
T_{\mathrm{setup}} .
\]
Consequently, in any declared regime where
\[
A_{\mathrm{idx}}(N,k,d_\Delta,b)
\,
M_{\mathrm{Betti}}^{\mathrm{imp}}
(n,N,k,d_\Delta,\gamma,\varepsilon,\delta,b,\eta_{\mathrm{samp}})
+
T_{\mathrm{setup}}
\leq
T^{\mathrm{E2E}}_Q(N,\varepsilon,\delta)\,
\operatorname{polylog}(N)\operatorname{poly}(1/\varepsilon,\log(1/\delta)),
\]
the interface-closed comparison places the classical end-to-end cost within the
displayed polylogarithmic and precision factor of the quantum end-to-end cost in
\(N=|S_k|\).
\end{corollary}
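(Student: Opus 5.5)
The plan is to compose the three earlier results: the indexed matched-access certificate, the imported estimator module, and the interface-closure principle.

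First, \Cref{thm:tda-indexed-collapse} gives \(A_M=A^{\mathrm{TDA}}_{\mathrm{simp,row},N}\preceq_{\mathrm{int}}A_Q\) through an explicit transducer \(T_M\). Its sampling marginal is \(\eta_{\mathrm{samp}}\)-close to uniform on \(S_k\), its row answers have \(b\)-bit precision, and normalization is read from \(I_Q(x)\). One sample plus one local row transcript costs \(A_{\mathrm{idx}}(N,k,d_\Delta,b)\). I would check that these three operations match items (i)--(iii) of \Cref{ass:imported-betti-estimator} with the same parameters. The check is direct, since the same \(d_\Delta\), \(b\), and \(\eta_{\mathrm{samp}}\) appear in both statements. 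The imported estimator therefore runs on transducer-generated transcripts and meets the additive contract of \Cref{def:tda-class} with failure probability \(\delta\).

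Next, I bound the cost of the composed route. The estimator makes at most \(M_{\mathrm{Betti}}^{\mathrm{imp}}\) access and arithmetic operations. I simulate each access by one call to \(T_M\), at cost at most \(A_{\mathrm{idx}}\) by the per-transcript bound. Each arithmetic operation I bound by \(A_{\mathrm{idx}}\) as well, using \(A_{\mathrm{idx}}\ge 1\) and absorbing the \(b\)-bit arithmetic into the \(b\) term.

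Adaptivity needs care. The estimator chooses its row queries based on earlier answers, but the proof of \Cref{thm:tda-indexed-collapse} already produces every answer by a fresh deterministic evaluation on the queried basis input. The transcript distribution seen by the estimator is therefore exactly the ideal \(A_M\) distribution, up to the declared \(\eta_{\mathrm{samp}}\) and precision. Both of these errors are folded into \(M_{\mathrm{Betti}}^{\mathrm{imp}}\) by the assumption.

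The one-time costs are different. These are reading or computing \(N\) (the \(T_{\mathrm{count}}\) term, which need only be paid once) and any preprocessing of the declared graph-access representation. I collect them into \(T_{\mathrm{setup}}\), which yields the total
\[
T_M\le A_{\mathrm{idx}}\,M_{\mathrm{Betti}}^{\mathrm{imp}}+T_{\mathrm{setup}}.
\]

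Finally, I apply \Cref{lem:interface-closure-principle}. Since \(\mathcal B_{\mathrm{int}}\) is interface-closed and \(A_M\preceq_{\mathrm{int}}A_Q\) with charged overhead, the route belongs to the baseline infimum. This gives \(T_C^{\mathrm{E2E}}\le T_M\), which is the first display. The "consequently" clause is then the second half of the same lemma with the displayed competitiveness hypothesis substituted for \(T_M\).

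I expect the main obstacle to be the accounting step. I have to make sure that nothing in the imported estimator's error analysis needs access beyond the three licensed operations, such as exact \(\Delta_k\) entries rather than \(b\)-bit ones, or global normalization constants like \(\lambda_{\max}\). I also have to make sure the errors from \(\eta_{\mathrm{samp}}\) and finite precision propagate only through \(M_{\mathrm{Betti}}^{\mathrm{imp}}\) and not through an extra failure term. I would handle this by reading \Cref{ass:imported-betti-estimator} literally, since it declares that all such costs are included in \(M_{\mathrm{Betti}}^{\mathrm{imp}}\). I would also treat any normalization convention needed by the estimator as part of the public parameters in item (iv) of \Cref{def:rev-indexed-tda}, charging it to \(T_{\mathrm{setup}}\) otherwise.
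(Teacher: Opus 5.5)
Your proposal is correct and follows the paper's own argument. The indexed certificate supplies the matched sampling, row, and normalization access. The imported estimator runs on exactly that access, with all estimator-side costs contained in \(M_{\mathrm{Betti}}^{\mathrm{imp}}\). Multiplying by \(A_{\mathrm{idx}}\) and adding \(T_{\mathrm{setup}}\) bounds the route's cost, and \Cref{lem:interface-closure-principle} then places the route in the baseline and gives the competitiveness clause.

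Your extra bookkeeping makes explicit what the paper's short proof leaves implicit. This covers adaptivity via fresh branch evaluations and moving the one-time \(T_{\mathrm{count}}\) into \(T_{\mathrm{setup}}\). That move is harmless but unnecessary, since \(A_{\mathrm{idx}}\) already contains \(T_{\mathrm{count}}\).
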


\begin{proof}
\Cref{thm:tda-indexed-collapse} certifies that the reversible indexed
implementation induces the matched classical access model consisting of
simplex sampling, local row transcripts of \(\Delta_k\), and normalization
information. \Cref{ass:imported-betti-estimator} supplies an imported
estimator under exactly that access model, with all estimator-side costs
included in \(M_{\mathrm{Betti}}^{\mathrm{imp}}\). Multiplying by the charged
indexed-operation overhead and adding setup gives the displayed matched-route
cost. \Cref{lem:interface-closure-principle} then places that route in the
interface-closed classical baseline. The final statement is the corresponding
competitiveness condition.
\end{proof}

Since \(A_{\mathrm{idx}}=\operatorname{polylog}(N)\), the competitiveness
condition is fixed by \(M^{\mathrm{imp}}_{\mathrm{Betti}}\) and therefore by the
estimator's dependence on the spectral gap \(\gamma\). For constant \(\gamma\)
the matched route runs in \(\operatorname{polylog}(N)\); for
\(\gamma=1/\operatorname{poly}(n)\) the imported estimator scales as
\(n^{\Theta(1/\sqrt{\gamma})}\) and the matched route inherits that growth in
\(N\).

\subsubsection{Membership-density obstruction}
\label{subsubsec:membership-density-obstruction}

\begin{proposition}[Clique-density obstruction for membership-based simplex preparations]
\label{prop:clique-density-obstruction}
Consider a declared TDA implementation whose available simplex-preparation
mechanism starts from the uniform superposition over all \((k+1)\)-subsets of
\([n]\) and uses a reversible clique-membership predicate to mark the valid
\(k\)-simplices. Let
\[
R_k(G)=\frac{\binom{n}{k+1}}{|S_k|}
\]
be the inverse density of \(k\)-simplices among all \((k+1)\)-subsets. Then the
direct matched classical sampling route induced by the declared membership
predicate is rejection sampling with expected \(R_k(G)\) membership tests per
accepted \(k\)-simplex. Therefore this declared package certifies a
polylogarithmic matched sampler only in regimes where \(R_k(G)\) is
polylogarithmic in \(N=|S_k|\), unless additional generative structure is
declared and charged.
\end{proposition}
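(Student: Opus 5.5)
The plan is to build the matched classical route directly from the declared membership package, using \Cref{lem:single-branch}, and then compute its cost with a geometric-distribution argument. The package in the statement consists of two declared objects. The first is the uniform superposition over all \((k+1)\)-subsets of \([n]\); its computational-basis counterpart is the uniform distribution on \(\binom{[n]}{k+1}\), realized classically by the transducer's private randomness through a standard subset unranking of \([\binom{n}{k+1}]\), at cost \(\operatorname{poly}(k,\log n)\). The second is the reversible clique-membership predicate \(P_G\). By \Cref{lem:single-branch}, evaluating \(P_G\) on one basis input \(s\) returns the bit \([s\in S_k]\) at its declared charged cost \(T_{\mathrm{mem}}\).

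The transducer then repeats a simple loop. It draws \(s\sim\Unif\binom{[n]}{k+1}\), evaluates \(P_G(s)\) by single-branch evaluation, and outputs \(s\) on acceptance. Conditioned on acceptance, \(s\) is exactly uniform on \(S_k\), because every element of \(S_k\) has the same prior mass; hence the sampling error is zero.

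The cost analysis follows from the geometric distribution. Each trial succeeds independently with probability \(p=|S_k|/\binom{n}{k+1}=1/R_k(G)\), so the number of trials is geometric with mean \(R_k(G)\). This gives expected cost \(R_k(G)\,(T_{\mathrm{mem}}+\operatorname{poly}(k,\log n))\) per accepted simplex. For a worst-case version of the route, capping at \(R_k(G)\ln(1/\delta)\) trials gives failure probability at most \(\delta\), and this failure amplification is charged into \(\Gamma_M\). Validity under the information restriction of \Cref{def:interface-matched-access} holds for the same reason as in \Cref{thm:tda-indexed-collapse}: the transcript is a function only of the declared predicate, the public parameters \(n,k\), and private coins. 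This establishes that the induced route is rejection sampling with the stated overhead.

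The last clause needs care: the route is polylogarithmic only when \(R_k(G)\) is. Sufficiency is immediate from the cost bound. For necessity, the claim must be read as being about the \emph{direct} route licensed by the package. The only generative operation the package declares is a uniform ambient draw followed by a membership test. Any classical transcript that factors through these operations alone, without additional declared generative structure, therefore consists of ambient draws followed by membership tests. A standard argument then applies: each test on a fresh uniform ambient subset succeeds with probability \(1/R_k(G)\), so producing one element of \(S_k\) requires \(\Omega(R_k(G))\) expected tests. I will not attempt a general lower bound for arbitrary adaptive strategies that query the predicate on non-uniform inputs. Such a strategy would amount to extra generative structure (for example, an enumeration scheme guided by the graph representation), and the statement explicitly excludes it unless it is declared and charged. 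I expect this scoping step to be the main obstacle, namely stating precisely what "direct matched route" means, since otherwise a clever search over subsets could beat \(R_k(G)\) and the clause would be false as an unconditional lower bound.
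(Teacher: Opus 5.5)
Your proposal is correct and follows essentially the same route as the paper: uniform proposals over the ambient \((k+1)\)-subsets, acceptance by the declared membership predicate, exact uniformity on \(S_k\) conditional on acceptance, and expected \(R_k(G)\) trials from the acceptance probability \(|S_k|/\binom{n}{k+1}\). You read the ``only'' clause as the paper does, as a statement about the declared direct route rather than an unconditional lower bound, and your extra accounting (per-trial unranking cost, the \(R_k(G)\ln(1/\delta)\) cap, and the information-restriction check) just makes explicit what the paper's proof leaves implicit.
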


\begin{proof}
The declared predicate decides whether a candidate \((k+1)\)-subset is a
clique. A classical transducer using only that predicate and public randomness
can sample a uniformly random \((k+1)\)-subset, test it, and accept it exactly
when it is a clique. Conditional on acceptance, the output is uniform over
\(S_k\), because the proposal distribution is uniform over the ambient subset
space and every valid clique is accepted with the same probability. The
acceptance probability is
\[
\frac{|S_k|}{\binom{n}{k+1}},
\]
so the expected number of trials is
\[
R_k(G)=\frac{\binom{n}{k+1}}{|S_k|}.
\]
Thus the direct branch-evaluation route certified by this package has this
overhead. A polylogarithmic matched sampler follows from this certified route
when \(R_k(G)\) is polylogarithmic in \(N\). This statement concerns the declared
route; additional structure may support other classical algorithms.
\end{proof}

\subsubsection{Concrete adjudication of TDA access declarations}
\label{subsubsec:tda-proposal-adjudication}
\label{subsec:tda-access-declarations}

\Cref{thm:tda-interface-trichotomy} classifies implementation interfaces, not
complete quantum TDA algorithms. Different proposals may use different spectral
primitives, phase-estimation routines, or trace estimators; the audit asks
which information is declared to realize the simplex register and Laplacian
access. \Cref{tab:proposal-audit} summarizes the resulting adjudication.

\begin{table}[ht]
\centering
\small
\setlength{\tabcolsep}{2.5pt}
\renewcommand{\arraystretch}{1.08}
\begin{tabularx}{\textwidth}{@{}L{0.18\linewidth}L{0.22\linewidth}L{0.22\linewidth}L{0.16\linewidth}Y@{}}
\toprule
Proposal / family & Declared simplex access & Declared Laplacian/spectral access &
Audit class & Interface-closed consequence \\
\midrule
Lloyd--Garnerone--Zanardi~\cite{lloyd_quantum_2016} &
Uniform ambient \((k+1)\)-subsets with clique membership/projection &
Spectral routine for the combinatorial Laplacian &
Membership-based &
Matched rejection route with overhead
\(\binom{n}{k+1}/|S_k|\). \\
\midrule
Gyurik--Cade--Dunjko~\cite{gyurik_towards_2020} &
Clique membership structure in the simplex preparation &
Quantum TDA spectral primitives under declared access assumptions &
Membership-based / declared-access dependent &
Density overhead is charged unless a stronger generator is declared. \\
\midrule
Akhalwaya et al.~\cite{akhalwaya_comparing_2024} &
Classical sampling/Monte Carlo access when declared &
Stochastic trace or Betti-estimation module &
Estimator module &
Used as an imported classical route once matched access is certified. \\
\midrule
McArdle--Gilyen--Berta~\cite{mcardle_streamlined_2022} &
Structured simplex encoding; cheap unranking is an additional declared feature &
Streamlined spectral/TDA routines &
Candidate indexed structure &
Indexed certificate applies only when a reversible generator and local row
routines are declared and charged. \\
\midrule
Bounded-treewidth indexed family &
Exact owner-based unranking of \(S_k\) &
Local combinatorial Laplacian transition/value routines &
Reversible indexed &
Witnesses that the indexed class is nonempty; cheap indexing implies
\(N=\operatorname{poly}(n)\). \\
\bottomrule
\end{tabularx}
\caption{Concrete TDA access declarations placed in the interface audit. The
classification concerns the declared implementation interface and the
transcripts it licenses.}
\label{tab:proposal-audit}
\label{tab:tda-interface-audit}
\end{table}

\subsubsection{Scope of the indexed certificate: bounded-treewidth realization}
\label{subsubsec:bounded-treewidth-witness}

The reversible indexed class \Irev{} of \Cref{def:rev-indexed-tda} is an
abstract specification. We now exhibit a concrete, structured family that
realizes it with an \emph{exact} bijection (hence $\eta_{\mathrm{samp}}=0$), and
then show that the very cost condition making the induced classical access cheap
forces the simplex count to be polynomial in the vertex number. The construction
therefore provides a witness of realizability together with a sharp
boundary for the regime in which the certificate operates.

Fix $q=k+1$. Let $G$ be a graph on $n$ vertices of treewidth at most $w$, and let
$(T,\{\mathcal{B}_t\}_{t\in[M]})$ be a tree decomposition with $|\mathcal{B}_t|\le w+1$
for all $t$, rooted and indexed in a fixed DFS pre-order. Every clique of $G$ is
contained in some bag, and the bags containing a fixed clique $\sigma$ form a
connected subtree of $T$. Define the \emph{canonical owner}
\[
  \own(\sigma)=\min\{\,t : \sigma\subseteq\mathcal{B}_t\,\},
\]
the minimum taken in the pre-order. For each bag set
\[
  L_t=\{\,\sigma\subseteq\mathcal{B}_t : |\sigma|=q,\ \sigma\text{ is a clique},\ \own(\sigma)=t\,\},
\]
so that the $k$-simplices partition exactly as $S_k=\bigsqcup_{t\in[M]}L_t$. Write
$c_t=|L_t|$, $P_t=\sum_{s\le t}c_s$ (in pre-order), and $P_M=N=|S_k|$.

\begin{lemma}[Bounded-treewidth realization of \Irev]\label{lem:tw-realization}
Let $G$, $(T,\{\mathcal{B}_t\})$, $q=k+1$ be as above. There is a reversible
indexed implementation $I_Q(x)\in\Irev$ whose generator
$G_{\mathrm{simp}}\colon[N]\to S_k$ is the exact bijection
\[
  i\in\{0,\dots,N-1\}\ \longmapsto\ \text{the }j\text{-th element of }L_t,
  \qquad P_{t-1}\le i<P_t,\quad j=i-P_{t-1},
\]
where the unique owner bag $t$ is found by binary search on $(P_t)_t$. Its costs
satisfy
\[
  T_{\mathrm{count}}=O\!\Big(M\binom{w+1}{q}\operatorname{poly}(w,\log n)\Big),
  \qquad
  T_{\mathrm{gen}}=\widetilde{O}\!\Big(\log M\cdot\log N+\binom{w+1}{q}\,k\log n\Big).
\]
In particular, evaluating $G_{\mathrm{simp}}$ on $i\sim\Unif[N]$ yields the exactly
uniform distribution on $S_k$, so the induced matched access of
\Cref{thm:tda-indexed-collapse} holds with $\eta_{\mathrm{samp}}=0$.
\end{lemma}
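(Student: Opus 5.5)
We will construct the package explicitly and verify each item of \Cref{def:rev-indexed-tda} in turn, then read off the sampling claim from \Cref{thm:tda-indexed-collapse}. The first step is combinatorial: we show that $S_k=\bigsqcup_t L_t$ is a genuine partition. Every clique lies in some bag, by the standard Helly-type property of tree decompositions, so $\own(\sigma)$ is well defined. Each $\sigma$ lies in exactly one $L_t$, namely $t=\own(\sigma)$. Hence $\sum_t c_t=N$, and the map $i\mapsto(t,j)$ with $P_{t-1}\le i<P_t$ is a bijection $[N]\to\{(t,j):0\le j<c_t\}$. Composing it with a fixed enumeration of each $L_t$ gives the claimed bijection $G_{\mathrm{simp}}\colon[N]\to S_k$. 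Exact uniformity on $S_k$ under $i\sim\Unif[N]$ is then immediate, so $\eta_{\mathrm{samp}}=0$.

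Next we bound the costs. For $T_{\mathrm{count}}$, we compute each $c_t$ by enumerating the $\binom{w+1}{q}$ $q$-subsets of $\mathcal B_t$. For each subset we test the clique property with $O(q^2)$ adjacency lookups and test ownership. To certify $\own(\sigma)=t$, we use the subtree property: the bags containing $\sigma$ form a connected subtree, and in DFS pre-order the minimum element of a connected subtree is its root. So $\own(\sigma)=t$ iff $\sigma\subseteq\mathcal B_t$ and either $t$ is the root or $\sigma\not\subseteq\mathcal B_{\mathrm{parent}(t)}$. This is a local check of cost $\operatorname{poly}(w,\log n)$, and summing over $M$ bags and forming prefix sums gives the stated $T_{\mathrm{count}}$. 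The table $(P_t)$ and the decomposition are part of the declared package, charged inside $T_{\mathrm{count}}$.

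For $T_{\mathrm{gen}}$, binary search over $(P_t)$ takes $O(\log M)$ comparisons of $O(\log N)$-bit integers. Given $t$ and $j$, we enumerate the $q$-subsets of $\mathcal B_t$ in lexicographic order, apply the clique and ownership filter, and return the $j$-th survivor, at cost $\binom{w+1}{q}$ times $k\log n$ up to polylog factors. All of these steps are classical and deterministic. We make them reversible by Bennett's compute--copy--uncompute construction with only constant-factor overhead, absorbed in the $\widetilde O$. This yields $U_{\mathrm{gen}}$, and \Cref{lem:single-branch} confirms that the classical evaluation cost equals the circuit size.

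For item (iii), $O_{\mathrm{nb}}$ enumerates the local neighbor candidates: the faces obtained by deleting a vertex, the cofaces obtained by adding a vertex inside a bag, and the down/up neighbors through shared faces. $O_{\mathrm{val}}$ computes the standard signed Laplacian entries from adjacency and orientation signs. Both are local and again made reversible. I expect the main obstacle to be the ownership test: in a reversible, data-independent circuit the filter must be realized as a fixed-length scan with a counter rather than as early termination. Keeping that scan within the $\binom{w+1}{q}k\log n$ budget requires the parent-bag ownership criterion rather than a global minimum search. With the package assembled, \Cref{thm:tda-indexed-collapse} applies directly and gives the matched access with $\eta_{\mathrm{samp}}=0$.
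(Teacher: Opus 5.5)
Your proposal is correct and follows essentially the same route as the paper's proof. Both arguments use the canonical-owner partition $S_k=\bigsqcup_t L_t$, the local parent-bag ownership test justified by the connected-subtree property, per-bag enumeration of the $\binom{w+1}{q}$ candidate subsets with prefix sums for $T_{\mathrm{count}}$, and binary search on $(P_t)_t$ followed by filtered enumeration inside the owner bag for $T_{\mathrm{gen}}$. The only differences are points you make explicit that the paper leaves implicit: clique containment in some bag, why the pre-order minimum of a connected subtree is its top node, and the Bennett-style conversion to reversible circuits. These tighten the argument without changing it.
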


\begin{proof}
The owner test is local under the pre-order: since the bags containing $\sigma$
form a connected subtree, $\own(\sigma)=t$ iff $\sigma\subseteq\mathcal{B}_t$ and
either $t$ is the root or $\sigma\not\subseteq\mathcal{B}_{\mathrm{parent}(t)}$.
Thus each $L_t$ is computed by enumerating the $\binom{w+1}{q}$ candidate
$q$-subsets of $\mathcal{B}_t$, testing each for cliqueness ($O(q^2)$ edge
lookups) and ownership ($O(qw)$ to test containment in the parent bag); summing
over bags and prefix-summing gives $T_{\mathrm{count}}$ and the array $(P_t)_t$.
Because owners partition $S_k$, each $k$-simplex is produced exactly once, so
$G_{\mathrm{simp}}$ is a bijection $[N]\to S_k$. On input $i$, a binary search over
$(P_t)_t$ ($O(\log M)$ comparisons on $O(\log N)$-bit integers) locates $t$, after
which the $j$-th element of $L_t$ is materialized as a sorted $q$-tuple in
$O(\binom{w+1}{q}\,k\log n)$ time. All operations are reversible classical
circuits on computational-basis inputs, so the implementation lies in \Irev{} by
\Cref{def:rev-indexed-tda}, and the exact-bijection clause of
\Cref{thm:tda-indexed-collapse} gives $\eta_{\mathrm{samp}}=0$. The local
transition/value routines $O_{nb},O_{val}$ for $\Delta_k$ are the standard
combinatorial up/down adjacency on $S_k$, which are likewise reversible classical
and local with $d_\Delta=O(\binom{w+1}{q}+kn)$.
\end{proof}

The family is structurally rich: it includes outerplanar graphs, cacti, and
series--parallel graphs (all treewidth $\le 2$), and every fixed-$w$ sparse family
with cycles. It is neither the full simplex nor a product structure, and---unlike
the membership-only preparation of \Cref{prop:clique-density-obstruction}---it
carries an exact index for $S_k$, so the inverse clique-density factor
$\binom{n}{k+1}/|S_k|$ never appears.

\begin{proposition}[Indexability--size incompatibility]\label{prop:tw-incompat}
Suppose the realization of \Cref{lem:tw-realization} has generation cost
$T_{\mathrm{gen}}=\polylog(N)$. Then necessarily $\binom{w+1}{k+1}=\polylog(N)$,
and since any graph admits a tree decomposition with $M=O(n)$ bags, one has
\[
  N=|S_k|\ \le\ M\binom{w+1}{k+1}\ =\ \widetilde{O}(n)\ =\ \operatorname{poly}(n).
\]
Consequently the boundary operators $\partial_k,\partial_{k+1}$ have
$\operatorname{poly}(n)$ dimensions, and the \emph{exact} Betti number
\[
  \beta_k=\bigl(|S_k|-\rank\partial_k\bigr)-\rank\partial_{k+1}
\]
is computable classically in $\operatorname{poly}(n)$ time by exact rank
computation over $\mathbb{Q}$ (e.g.\ fraction-free elimination), using exact
linear algebra over the boundary matrices.
\end{proposition}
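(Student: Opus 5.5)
The plan has three steps: extract a bound on $\binom{w+1}{k+1}$ from the generation cost, bound $N$ by summing over owner bags, and then compute $\beta_k$ from ranks of polynomial-size boundary matrices.

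\emph{Step 1.} Take the cost formula for $T_{\mathrm{gen}}$ from \Cref{lem:tw-realization} at face value. In the realization given there, materializing the $j$-th element of $L_t$ proceeds by enumerating the candidate $q$-subsets of the owner bag. So the $\binom{w+1}{q}\,k\log n$ summand is a genuine lower-order term of the implemented routine, not just a slack upper bound. Under the hypothesis $T_{\mathrm{gen}}=\polylog(N)$, this summand is at most $\polylog(N)$, and hence $\binom{w+1}{k+1}\le\polylog(N)$. This step is the main obstacle. The statement is about the specific realization, so I would argue that its declared cost includes the enumeration term, rather than claim an information-theoretic lower bound on any generator. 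I would state this reading explicitly: ``generation cost'' means the charged cost of the declared realization, whose $\widetilde O$ expression contains that term additively.

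\emph{Step 2.} Owners partition $S_k$ and $L_t$ consists of $q$-subsets of $\mathcal B_t$, so $N=\sum_t c_t\le M\binom{w+1}{k+1}$. Next, invoke the standard fact that any tree decomposition can be pruned to one with $M\le n$ bags without increasing width. Contract every edge $st$ with $\mathcal B_s\subseteq\mathcal B_t$; after this, each bag introduces a vertex not in its parent. Combining with Step 1 gives $N\le n\cdot\polylog(N)$. To turn this into $N=\widetilde O(n)$, note that $N\le\binom{n}{k+1}\le 2^n$, so $\log N\le n$. Then $\polylog(N)\le\polylog$-type factors in $n$ that are at most polynomial in $n$. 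In the regime $N\ge n$ one gets $N/\polylog(N)\le n$, hence $N\le n\,\polylog(n)$; if $N<n$ the claim is trivial. Either way $N=\operatorname{poly}(n)$.

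\emph{Step 3.} The boundary matrix $\partial_k$ has size $|S_{k-1}|\times|S_k|$ and $\partial_{k+1}$ has size $|S_k|\times|S_{k+1}|$, with entries in $\{0,\pm1\}$. To see that these are polynomial, observe that every clique lies in some bag. Hence $|S_{k\pm1}|\le M\binom{w+1}{k+1\pm1}$, and $\binom{w+1}{k}$, $\binom{w+1}{k+2}$ are polynomially related to $\binom{w+1}{k+1}$ for the relevant $w,k$. More simply, $|S_{k-1}|\le (k+1)|S_k|+n$-type bounds follow because each $(k-1)$-face of a $k$-simplex lies in a clique. For $S_{k+1}$, I would bound by $M\binom{w+1}{k+2}\le M\,(w+1)\binom{w+1}{k+1}$, which is still $\operatorname{poly}(n)$ since $w<n$. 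Then apply the rank–nullity identity $\beta_k=\dim\ker\partial_k-\rank\partial_{k+1}=(|S_k|-\rank\partial_k)-\rank\partial_{k+1}$. Both ranks are computed by Bareiss fraction-free elimination over $\mathbb Q$, which uses polynomially many operations on integers of polynomial bit length (Hadamard bound). This gives exact $\beta_k$ in $\operatorname{poly}(n)$ time.
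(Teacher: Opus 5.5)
Your proposal is correct and follows essentially the paper's route. You read $\binom{w+1}{k+1}=\polylog(N)$ off the enumeration term of $T_{\mathrm{gen}}$, prune to $M\le n$ bags to get $N\le n\,\polylog(N)=\operatorname{poly}(n)$, and compute $\beta_k$ by exact fraction-free rank computation. Two small differences: the paper takes logarithms directly instead of bootstrapping through $N\le 2^n$, and your bounds $|S_{k\pm1}|\le M\binom{w+1}{k+1\pm1}$ supply a size check for $\partial_k,\partial_{k+1}$ that the paper only asserts.

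Drop the aside $|S_{k-1}|\le (k+1)|S_k|+n$. It does not follow from the face argument, because $(k-1)$-simplices need not lie in any $k$-simplex (e.g.\ $K_{2,n-2}$ with $k=2$ has $2n-4$ edges and no triangles). It is harmless, though: $M\binom{w+1}{k}\le M(k+1)\binom{w+1}{k+1}$ for $k\le w$ already suffices, as does simply discarding the zero rows of $\partial_k$.
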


\begin{proof}
The term $\binom{w+1}{k+1}\,k\log n$ in $T_{\mathrm{gen}}$ is $\polylog(N)$ only if
$\binom{w+1}{k+1}=\polylog(N)$. A tree decomposition may be assumed to have at most
$n$ non-redundant bags (delete any bag contained in a neighbour and contract,
repeat), so $M=O(n)$. Hence $N\le M\binom{w+1}{k+1}=O(n)\cdot\polylog(N)$. Taking
logarithms, $\log N\le \log O(n)+O(\log\log N)$, so $\log N=O(\log n)$ and
$N=\operatorname{poly}(n)$, indeed $N=\widetilde{O}(n)$. With $\operatorname{poly}(n)$-dimensional
rational boundary matrices, both ranks are computed exactly in $\operatorname{poly}(n)$
time, and $\beta_k$ follows from the Euler/rank identity for the $k$-th homology.
\end{proof}

\paragraph{Diagnostic role.}
\Cref{prop:tw-incompat} makes the scope of the reversible indexed
regime explicit. Whenever the induced classical access is genuinely cheap
($T_{\mathrm{gen}}=\polylog N$ via the above unranking), the instance is already
$\operatorname{poly}(n)$-sized and its exact Betti number is classically tractable by
linear algebra alone, so the comparison is already polynomial in \(n\). The
bounded-treewidth hypothesis is essential to this implication: the clique
complex of \(K_n\) admits \(\operatorname{polylog}(N)\) combinatorial unranking
with \(N=\binom{n}{k+1}=2^{\Theta(n)}\), so cheap indexing coexists with
exponential \(N\) outside bounded treewidth. Indexed families with
\(N=2^{\Omega(n)}\) and nontrivial \(\beta_k\), where the certificate carries
nontrivial end-to-end content, remain to be characterized. The regime that
motivates quantum TDA, namely \(N=\binom{n}{k+1}=2^{\Theta(n)}\) with
\(k=\Theta(n)\), forces \(\binom{w+1}{k+1}\)---and hence
\(T_{\mathrm{gen}}\)---to be exponential in the bounded-treewidth realization.
\Cref{thm:tda-indexed-collapse} certifies exactly the implementations whose
simplex preparation is a reversible classical generator: for those,
single-branch evaluation supplies the matched route. Preparations that obtain
the simplex register through coherent membership projection or amplitude
amplification fall under the membership-based class, whose directly certified
matched route is rejection sampling with the inverse clique-density overhead
\(\binom{n}{k+1}/|S_k|\), unless an additional generative structure is declared
and charged.

\subsection{Illustration: Low-Rank Access Separation}
\label{sec:lowrank}
\label{subsec:low-rank-application}

This illustration isolates the input-access mechanism behind the framework in a
purely low-rank matrix setting.  The example shows that the declared
access model determines the reducibility verdict for low-rank structure. For the same
structured low-rank family and the same recovery contracts, entrywise access
faces a localization barrier, while a mass-exposing \(\ell_2\) sample-and-query
interface exposes the relevant support directly.  The admissibility framework
then decides which of these routes belongs to the interface-closed baseline
from the declared implementation package.

\subsubsection{A block-sparse low-rank family}

Fix integers \(N,r,s\) with \(rs\leq N\), and let
\[
m = r s^2 .
\]
Let \(\mathcal C_{N,r,s}\) be the class of matrices \(A\in\mathbb R^{N\times N}\)
of the form
\[
A=\sum_{\ell=1}^r \sigma_\ell u_\ell v_\ell^\top ,
\]
where each \(u_\ell\) is supported on an unknown row set
\(R_\ell\subset [N]\) of size \(s\), each \(v_\ell\) is supported on an
unknown column set \(C_\ell\subset [N]\) of size \(s\), the row sets are
mutually disjoint, the column sets are mutually disjoint, all entries on
\(R_\ell\times C_\ell\) are nonzero, and
\[
\|u_\ell v_\ell^\top\|_F=1,
\qquad
0<c_\sigma\leq |\sigma_\ell|\leq C_\sigma<\infty .
\]
The supports are block-disjoint, so
\[
\operatorname{rank}(A)=r,
\qquad
|\operatorname{supp}(A)|=r s^2=m .
\]
The parameters \(c_\sigma,C_\sigma\) are absolute constants.  They ensure that
the Frobenius mass of the \(r\) blocks is balanced up to constant factors.

We compare two access models.  Entrywise access returns \(A_{ij}\) on a queried
coordinate \((i,j)\).  The \(\ell_2\) sample-and-query access model permits
sampling coordinates according to
\[
\Pr[(i,j)] = \frac{|A_{ij}|^2}{\|A\|_F^2},
\]
together with value queries to sampled or requested coordinates, with the
declared setup and sampling costs charged.

\subsubsection{Recovery contracts}

The first contract asks for a single nonzero witness:
\[
\mathrm{Red}^{\mathrm{wit}}_N(A,z)=1
\quad\Longleftrightarrow\quad
z=(i,j)\ \text{and}\ A_{ij}\neq 0 .
\]
The second contract asks for component representatives.  Let
\[
B_\ell = R_\ell\times C_\ell
\]
be the \(\ell\)-th nonzero block.  The component-representative contract is
satisfied by an output set \(Z\subset [N]\times [N]\) if
\[
Z\cap B_\ell\neq \varnothing
\qquad
\text{for every } \ell\in[r].
\]
The output may contain more than one representative per block.  The cost
measure below counts access queries and samples; arithmetic and output-writing
costs can only increase the total cost.

\begin{theorem}[Low-rank access non-invariance]
\label{thm:low-rank-access-noninvariance}
For the block-sparse low-rank class \(\mathcal C_{N,r,s}\), the witness and
component-representative contracts have different query complexities under
entrywise access and under \(\ell_2\) sample-and-query access.

For witness recovery,
\[
R^{\mathrm{wit}}_{\mathrm{entry}}
   (\mathcal C_{N,r,s},1/3)
=
\Omega\!\left(\frac{N^2}{r s^2}\right),
\qquad
R^{\mathrm{wit}}_{\ell_2\text{-}\mathrm{SQ}}
   (\mathcal C_{N,r,s},1/3)
=
O(1).
\]
For component-representative recovery,
\[
R^{\mathrm{comp}}_{\mathrm{entry}}
   (\mathcal C_{N,r,s},1/3)
=
\Omega\!\left(\frac{N^2}{s^2}\right),
\qquad
R^{\mathrm{comp}}_{\ell_2\text{-}\mathrm{SQ}}
   (\mathcal C_{N,r,s},\delta)
=
O\!\left(r\log\frac r\delta\right).
\]
The entrywise lower bounds hold for randomized algorithms with constant
success probability, and the \(\ell_2\)-sample-and-query upper bounds hold with
the stated success probabilities.
\end{theorem}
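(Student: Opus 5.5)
The plan is to prove the four bounds separately. Both $\ell_2$ upper bounds come from the fact that the sampling distribution is supported on $\operatorname{supp}(A)$ and puts balanced mass on each block. Both entrywise lower bounds come from Yao's principle with a hard distribution that hides the blocks at uniformly random positions.

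\textbf{$\ell_2$ upper bounds.} Every coordinate drawn with probability $|A_{ij}|^2/\|A\|_F^2$ lies in $\operatorname{supp}(A)$. So a single sample, plus optionally one value query to confirm it, is already a valid witness; this gives $O(1)$ for witness recovery with zero error. For component representatives I will use block-disjointness. It gives $\|A\|_F^2=\sum_\ell \sigma_\ell^2$, and the mass of block $B_\ell$ is $\sigma_\ell^2/\sum_{\ell'}\sigma_{\ell'}^2\ge c_\sigma^2/(rC_\sigma^2)$. Drawing $t$ independent samples, the probability that block $\ell$ is missed is at most $(1-c/r)^t\le e^{-ct/r}$. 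A union bound over the $r$ blocks then gives failure probability at most $re^{-ct/r}\le\delta$ once $t=O(r\log(r/\delta))$. The output $Z$ is the set of sampled coordinates.

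\textbf{Entrywise lower bounds.} I will apply Yao's minimax principle. The hard distribution is as follows: choose the row sets $R_1,\dots,R_r$ uniformly at random among disjoint $s$-subsets of $[N]$, and the column sets $C_\ell$ likewise. Fill each block with a fixed rank-one pattern, for example the constant block with $u_\ell=\mathbf 1_{R_\ell}/\sqrt s$, $v_\ell=\mathbf 1_{C_\ell}/\sqrt s$ and $\sigma_\ell=1$. Against a deterministic algorithm, every query answered $0$ leaves the conditional distribution of the supports uniform over configurations avoiding the queried cells.

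For witness recovery, the $t$-th query or the final output hits $\operatorname{supp}(A)$ with probability at most roughly $m/(N^2-t)$, where $m=rs^2$. A union bound over $q+1$ cells gives success probability at most $(q+1)\,m/(N^2-q)$. Requiring success $\ge 2/3$ forces $q=\Omega(N^2/(rs^2))$.

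For component representatives, I will fix attention on block $B_1$. Under the symmetry of the hard distribution, the marginal of $B_1$ is uniform over $s\times s$ combinatorial rectangles. Any fixed cell lies in $B_1$ with probability $s^2/N^2$. The complication is that answers revealing other blocks are informative about $B_1$. To handle this, I will give the algorithm the other blocks $B_2,\dots,B_r$ for free. Conditioned on them, $B_1$ is uniform over rectangles in the residual $(N-(r-1)s)\times(N-(r-1)s)$ grid. Since $rs\le N$ I must treat the extreme $rs=N$ separately: there $B_1$ is determined by the others. In that case I will instead pick a uniformly random block index, or give away only half of the rows and columns, so that a residual grid of size $\Theta(N)$ remains. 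Then the same union bound over the $q+|Z|$ cells touched gives success $O((q+|Z|)s^2/N^2)$, which yields $\Omega(N^2/s^2)$ when the output size is charged as queries, as the statement's cost convention permits.

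\textbf{Main obstacle.} I expect the main obstacle to be this last step. The difficulty is controlling the leakage of information about a particular block from hits on other blocks, and doing so uniformly over the boundary regime $rs\approx N$, while keeping the per-cell hit probability at $O(s^2/N^2)$.
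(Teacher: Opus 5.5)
Your decomposition matches the paper's: Yao's principle with randomly placed blocks for both entrywise bounds, one $\ell_2$ sample for the witness, coupon collection for the representatives, and one hidden block for the component lower bound. For the witness bound there is a cleaner form than your ``roughly $m/(N^2-t)$'' conditional estimate. Along the all-zero path the queried cells and the final guess are fixed, and each lies in $\operatorname{supp}(A)$ with unconditional probability $m/N^2$, so a union bound suffices. You are right to charge $|Z|$, and in fact this is necessary, not optional. If only queries count, $Z=[N]\times[N]$ meets the component contract with zero queries. The paper's step bounding the chance of outputting a representative of the hidden block by $1/M$ holds only for a single output cell, so it tacitly needs the same convention.

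The genuine gap is the one you flag, and it covers the whole regime $N-(r-1)s=o(N)$, not only $rs=N$. Revealing $r-1$ blocks leaves only $N-(r-1)s$ free rows and columns, so the hidden block cannot have $\Theta(N^2/s^2)$ candidate positions. The paper's proof has the same hole in its count $M=\Theta(N^2/s^2)$. Neither of your repairs works:
\begin{itemize}
\item Revealing half the blocks leaves a residual grid that is still fully packed by about $r/2$ random blocks. This is the same problem at half scale: hits on the other residual blocks eventually determine $B_1$, so the per-cell bound $O(s^2/N^2)$ fails along adaptive paths.
\item Randomizing the target index only averages conditional hit probabilities that sum to at most $1$ over the $r$ blocks. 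That gives $\Omega(r)$, not $\Omega(r^2)=\Omega(N^2/s^2)$.
\end{itemize}

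A counting argument avoids isolating any block. Split rows and columns into $K=\lfloor N/s\rfloor$ fixed groups of size $s$. Place the blocks by independent uniform injections $[r]\to[K]$ for row groups and column groups, with constant entries so every answer is just zero or nonzero. Append the output cells as extra probes and delete repeated probes. The support pattern $S$ is then uniform over the $\binom{K}{r}^2r!$ size-$r$ partial matchings of $[K]\times[K]$. On success, the answer string of length $T=q+|Z|$ has exactly $r$ ones, and these determine $S$. Hence
\[
\Pr[\text{success}]\le\frac{\binom{T}{r}}{\binom{K}{r}^2\,r!}\le\Bigl(\frac{e^2T}{K^2}\Bigr)^{r},
\]
which forces $T=\Omega(K^2)=\Omega(N^2/s^2)$ uniformly over $rs\le N$. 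At $rs=N$ this is the nuts-and-bolts lower bound for finding a hidden permutation.
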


\begin{proof}
We use Yao's minimax principle for the entrywise lower bounds.

For witness recovery, draw the \(r\) row supports and \(r\) column supports
uniformly from valid disjoint configurations, and then draw block factors
satisfying the normalization and nonzero-support assumptions.  Fix a
deterministic adaptive entrywise algorithm making \(q\) queries.  Until a query
hits \(\operatorname{supp}(A)\), every answer is zero.  For every fixed queried
coordinate, the probability over the random block placement that it lies in the
support is
\[
\frac{m}{N^2}=\frac{r s^2}{N^2}.
\]
By the union bound,
\[
\Pr[\text{some query hits }\operatorname{supp}(A)]
\leq
\frac{q m}{N^2}.
\]
If no query hits the support, the transcript contains only zeros.  A final
guessed coordinate then lies in the still-hidden support with probability at
most \(m/(N^2-q)\), for \(q<N^2\).  Thus the success probability is bounded by
\[
\frac{q m}{N^2}+\frac{m}{N^2-q}.
\]
Constant success requires
\[
q=\Omega\!\left(\frac{N^2}{m}\right)
=
\Omega\!\left(\frac{N^2}{r s^2}\right).
\]
Yao's principle converts this distributional deterministic lower bound into a
randomized worst-case lower bound.

Under \(\ell_2\) sample-and-query access, all Frobenius mass lies on
\(\operatorname{supp}(A)\).  Therefore one sample from
\[
\Pr[(i,j)]=|A_{ij}|^2/\|A\|_F^2
\]
is a nonzero coordinate with probability one.  This proves the \(O(1)\)
witness upper bound.

For component-representative recovery under \(\ell_2\) sampling, the Frobenius
mass of block \(\ell\) is \(\sigma_\ell^2\), because
\(\|u_\ell v_\ell^\top\|_F=1\).  Since
\(c_\sigma\leq |\sigma_\ell|\leq C_\sigma\), a single \(\ell_2\) sample falls
in any fixed block with probability bounded above and below by constant
multiples of \(1/r\).  Standard coupon collection therefore gives
\[
O\!\left(r\log\frac r\delta\right)
\]
samples to touch all \(r\) blocks with probability at least \(1-\delta\).  The
algorithm outputs the sampled coordinates; with the stated probability this set
contains at least one representative from every block.

For the entrywise component lower bound, fix \(r-1\) blocks and hide the
remaining block uniformly among a family of disjoint candidate \(s\times s\)
row-column locations.  The number of such candidates is
\[
M=\Theta\!\left(\frac{N^2}{s^2}\right).
\]
A deterministic entrywise algorithm that has not queried inside the hidden
block receives the same transcript for all remaining candidates.  If it makes
\(q=o(M)\) queries, the probability of intersecting the hidden block is
\(o(1)\), and the probability of naming a representative from it remains
bounded away from a constant.  Constant success therefore requires
\[
q=\Omega(M)
=
\Omega\!\left(\frac{N^2}{s^2}\right).
\]
\end{proof}

The theorem gives a concrete access-relative separation.  The same rank-\(r\)
promise and the same recovery contract have a localization cost under
entrywise access and a direct support-exposure route under
\(\ell_2\) sample-and-query access. The operative difference is the transcript
distribution licensed by the declared access model.

\subsubsection{Interface-closure consequence}

Let \(A_{\ell_2\text{-}\mathrm{SQ}}\) denote the classical
\(\ell_2\) sample-and-query access model for the matrix \(A\).  Suppose an
applied quantum claim for a task over \(\mathcal C_{N,r,s}\) declares an
implementation package \(I_Q(A)\) containing a mass-exposing data structure,
norm tree, sampling table, amplitude data structure, or common representation
from which valid \(\ell_2\) sample-and-query transcripts are generated with
charged overhead
\[
\Gamma_{\ell_2\text{-}\mathrm{SQ}}(N,r,s,\varepsilon,\delta).
\]
Then
\[
A_{\ell_2\text{-}\mathrm{SQ}}\preceq_{\mathrm{int}} A_Q .
\]
By the interface-closure principle, the corresponding sample-and-query
algorithms from \Cref{thm:low-rank-access-noninvariance} enter the
interface-closed classical baseline.

For the witness contract, the matched route has cost
\[
T^{\mathrm{wit}}_M
=
O\!\left(\Gamma_{\ell_2\text{-}\mathrm{SQ}}\right),
\]
because one \(\ell_2\)-sample returns a nonzero witness.  Hence
\[
T^{\mathrm{E2E,wit}}_C
\leq
O\!\left(\Gamma_{\ell_2\text{-}\mathrm{SQ}}\right)
\]
under the interface-closed comparison.

For the component-representative contract, the matched route has cost
\[
T^{\mathrm{comp}}_M
=
O\!\left(
\Gamma_{\ell_2\text{-}\mathrm{SQ}}\,
r\log\frac r\delta
\right),
\]
up to the charged cost of writing the sampled representatives.  Therefore
\[
T^{\mathrm{E2E,comp}}_C
\leq
O\!\left(
\Gamma_{\ell_2\text{-}\mathrm{SQ}}\,
r\log\frac r\delta
\right)
\]
for the interface-closed baseline.

Under the declared entrywise access standard, the relevant certified reduction
costs are the localization bounds
\[
R^{\mathrm{wit}}_{\mathrm{entry}}
=
\Omega\!\left(\frac{N^2}{r s^2}\right),
\qquad
R^{\mathrm{comp}}_{\mathrm{entry}}
=
\Omega\!\left(\frac{N^2}{s^2}\right).
\]
Thus the framework assigns different end-to-end baseline verdicts to the same
low-rank family according to the declared implementation interface: a
mass-exposing package licenses the sample-and-query route, while an entrywise
declaration leaves the localization cost visible in the access-reduction
complexity.

\section{Discussion and Conclusion}
\label{sec:discussion}

The main conclusion is that applied quantum-advantage claims should be audited
at the level of declared implementation interfaces. A primitive speedup becomes
an end-to-end comparison through the input transcripts licensed by the
implementation package and the output contract required by the application. The
admissibility relation
\(A_M\preceq_{\mathrm{int}}A_Q\) formalizes this interface-level question. When
a matched classical access model is generated by the same declared package that
realizes the quantum access, the corresponding classical route belongs to the
interface-closed baseline with all setup, sampling, query, arithmetic,
precision, and amplification costs charged.

The main applied certificate developed here is the TDA interface audit for
additive normalized Betti estimation on clique complexes. The audit separates
three operational regimes. Abstract spectral declarations, such as an
unspecified block-encoding or state-preparation oracle for \(\Delta_k\), receive
matched-access certification from an accompanying implementation package,
transcript reduction, or shared representation. Membership-based simplex
preparations induce a rejection route whose cost exposes the inverse
clique-density factor
\[
    R_k(G)=\frac{\binom{n}{k+1}}{|S_k|}.
\]
Reversible indexed implementations certify matched simplex
sampling, local Laplacian row transcripts, and normalization information by
single-branch evaluation of the declared reversible routines. Combined with an
imported normalized-Betti estimator under this matched access, the resulting
route is an admissible member of the interface-closed classical comparison.
Exact Betti output, unnormalized Betti output, homology-basis output, dense
spectral output, and hidden-algebraic regimes require separate analysis.

The low-rank separation results support the same interface principle in a
simpler setting. The same low-rank object can present a localization barrier
under entrywise access and a cheap reduction under \(\ell_2\)
sample-and-query access. Thus structural parameters such as rank, sparsity, or
bond dimension determine reducibility together with the declared access model.
A meaningful
reducibility statement must specify the triple
\[
    (\mathcal C_N,A,\operatorname{Red}_N),
\]
together with the precision and failure probability demanded by the task. For
the low-rank QML and QLSA tasks covered by known quantum-inspired reductions,
the sufficient competitiveness rule applies under certified matched
sample-and-query access, compact output, the cited imported classical route,
and an interface-closed baseline. Finite-size utility, polynomial improvements,
and tasks governed by different reductions are separate questions.

The output-side results express the complementary constraint. A large Hilbert
space and a dense classical answer are distinct operational requirements. A
state prepared by a quantum primitive becomes a dense classical vector only
under a contract that asks for dense coordinates. Scalar expectations, samples,
local observables, dense vectors, and tomographic reconstructions therefore
carry different end-to-end costs. Dense output contracts can multiply the full
preparation--primitive--measurement cycle, while compact contracts preserve the
possibility of advantage. Dense-output QLSA/HHL serves here as a calibration
example for output contracts. A positive quantum advantage claim requires an
external classical lower bound under the declared access model.

Overall, the framework gives an operational audit for end-to-end inheritance.
On the input side, it identifies declared quantum implementations that certify
a matched classical route, expose a density or preprocessing overhead, or
require an additional interface certificate. On the output side, it separates
compact contracts from dense extraction tasks. Applied quantum advantage is
therefore evaluated by the primitive, the declared interface, the admissible
baseline family, and the operational output contract.

\section*{Acknowledgments}
This work was supported by the project Serverless4HPC (Grant
PID2023-152804OB-I00), funded by MICIU/AEI/10.13039/501100011033 and by the
European Regional Development Fund (ERDF/EU).

\bibliographystyle{plain}
\bibliography{references}

\appendix
\section{Proofs and Output Bounds}
\label{sec:methods}

\subsection{Proofs for access-relative reducibility}
\label{subsec:proof-low-rank-separation}

All lower bounds in this subsection are query lower bounds. Arithmetic and
output-writing costs can only increase the total cost. The randomized lower
bounds use Yao's minimax principle: it is enough to give a distribution over
instances on which every deterministic algorithm with too few queries has
success probability below the required threshold.

\begin{definition}[Block-sparse low-rank class]
\label{def:block-sparse-low-rank-class}
Let $\mathcal C_{N,r,s}$ be the class of matrices $A\in\mathbb R^{N\times N}$
of the form
\[
A=\sum_{\ell=1}^{r}\sigma_\ell u_\ell v_\ell^\top,
\]
where each $u_\ell$ is supported on an unknown row set
$R_\ell\subset[N]$ of size $s$, each $v_\ell$ is supported on an unknown column
set $C_\ell\subset[N]$ of size $s$, the row sets are mutually disjoint, the
column sets are mutually disjoint, all entries on $R_\ell\times C_\ell$ are
nonzero, and
\[
\|u_\ell v_\ell^\top\|_F=1,
\qquad
0<c_\sigma\le |\sigma_\ell|\le C_\sigma<\infty.
\]
Then $\rank(A)=r$ and $|\supp(A)|=rs^2$.
\end{definition}

\begin{theorem}[Witness recovery]
\label{thm:access-separation}
\label{thm:witness-recovery}
For the witness contract
\[
\operatorname{Red}^{\mathrm{wit}}_N(A,z)=1
\iff
z=(i,j)\text{ and }A_{ij}\ne 0,
\]
entrywise access and $\ell_2$ sample-and-query access satisfy
\[
R_{\mathrm{entry}}^{\mathrm{wit}}(\mathcal C_{N,r,s},1/3)
=
\Omega\!\left(\frac{N^2}{rs^2}\right),
\qquad
R_{\ell_2\text{-SQ}}^{\mathrm{wit}}(\mathcal C_{N,r,s},1/3)=O(1).
\]
The lower bound is worst-case over $\mathcal C_{N,r,s}$ and holds for
randomized algorithms with success probability at least $2/3$.
\end{theorem}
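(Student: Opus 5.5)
The lower bound is proved through Yao's minimax principle, and the upper bound is a one-line observation about where the Frobenius mass sits. This mirrors the witness half of the proof of \Cref{thm:low-rank-access-noninvariance}; here I would only make the hard distribution and the bookkeeping explicit.

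\emph{Upper bound.} Under \(\ell_2\) sample-and-query access, a sampled coordinate \((i,j)\) has probability \(|A_{ij}|^2/\|A\|_F^2\). This is zero off \(\supp(A)\), and \(\|A\|_F^2=\sum_\ell\sigma_\ell^2\ge r c_\sigma^2>0\). So a single sample is a nonzero coordinate with probability one, and outputting it satisfies \(\operatorname{Red}^{\mathrm{wit}}_N\). This gives \(R^{\mathrm{wit}}_{\ell_2\text{-SQ}}=O(1)\), even with zero error.

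\emph{Lower bound.} I fix a hard distribution \(\mathcal D\) on \(\mathcal C_{N,r,s}\). Draw disjoint row sets \(R_1,\dots,R_r\) and disjoint column sets \(C_1,\dots,C_r\) uniformly at random (possible since \(rs\le N\)). On each block place the fixed rank-one factor \(u_\ell v_\ell^\top\) with all entries equal to \(1/s\), so the Frobenius norm is \(1\), and set \(\sigma_\ell=c_\sigma\). The key observation is that for every fixed coordinate \((i,j)\),
\[
\Pr_{\mathcal D}[(i,j)\in\supp(A)]=\frac{rs^2}{N^2}.
\]
This holds because the row \(i\) lies in \(\bigcup_\ell R_\ell\) with probability \(rs/N\). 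Conditioned on \(i\in R_\ell\), the column \(j\) lies in \(C_\ell\) with probability \(s/N\), since \(C_\ell\) is a uniformly random \(s\)-subset independent of the rows.

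Now fix a deterministic adaptive algorithm with \(q\) queries. Consider the event \(E\) that every answer is \(0\). On \(E\) the algorithm follows one fixed query path \(z_1,\dots,z_q\) and then one fixed output \(z^\ast\). Its success probability is therefore at most
\[
\Pr[\text{some } z_t \in \supp(A)]+\Pr[z^\ast\in\supp(A)]\le \frac{(q+1)rs^2}{N^2},
\]
by the union bound applied to the fixed path. We may assume \(z^\ast\) is not a queried zero, since otherwise it fails outright; either way the marginal bound above suffices. Requiring success probability at least \(2/3\) forces \(q=\Omega(N^2/(rs^2))\). Yao's principle then transfers this to randomized algorithms in the worst case over \(\mathcal C_{N,r,s}\).

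The only delicate point is the adaptivity argument. Adaptivity gives no help before the first hit, because all answers are zero, so the queried coordinates up to the first hit form a fixed sequence. That is what justifies using marginal probabilities instead of conditional ones. It also avoids the slightly looser \(m/(N^2-q)\) conditional term used in the main text. I expect no other real obstacle; the rest is verifying that the random instance lies in the class, which holds since all block entries are nonzero and the block norms are normalized.
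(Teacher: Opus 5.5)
Your proof is correct and follows the paper's argument. Both use Yao's principle with a uniformly random disjoint block placement and a union bound over the all-zero query path for the lower bound, and the observation that every $\ell_2$ sample lands on the support for the upper bound. Your version differs only in being slightly more careful: you bound the final guess by its unconditional marginal $rs^2/N^2$ rather than the paper's conditional term $m/(N^2-q)$, and you make the hard distribution and the marginal computation explicit.
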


\begin{theorem}[Component recovery]
\label{thm:component-recovery}
Let the component-representative contract require an output set containing at
least one nonzero coordinate from each of the $r$ blocks. Under balanced
$\ell_2$ sample-and-query access,
\[
R_{\ell_2\text{-SQ}}^{\mathrm{comp}}(\mathcal C_{N,r,s},\delta)
=
O(r\log(r/\delta)).
\]
Under entrywise access, recovering a representative from every block has a
localization lower bound
\[
R_{\mathrm{entry}}^{\mathrm{comp}}(\mathcal C_{N,r,s},1/3)
=
\Omega\!\left(\frac{N^2}{s^2}\right).
\]
\end{theorem}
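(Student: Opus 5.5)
The statement to prove is \Cref{thm:component-recovery}; it has an upper bound and a lower bound, and I would treat them separately, following the proof of \Cref{thm:low-rank-access-noninvariance}.

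\textbf{Upper bound.} The algorithm draws \(T\) independent \(\ell_2\) samples and outputs the set of sampled coordinates.
\begin{itemize}
\item Every sample is a nonzero coordinate, since all Frobenius mass lies on \(\supp(A)\).
\item The blocks are disjoint and \(\|u_\ell v_\ell^\top\|_F=1\), so the Frobenius mass of block \(B_\ell\) is exactly \(\sigma_\ell^2\).
\item Hence a single sample lands in \(B_\ell\) with probability \(p_\ell=\sigma_\ell^2/\sum_j\sigma_j^2\ge c_\sigma^2/(rC_\sigma^2)\).
\item For a fixed block, the probability of missing it in all \(T\) samples is at most \((1-p_\ell)^T\le \exp(-Tc_\sigma^2/(rC_\sigma^2))\).
\item A union bound over the \(r\) blocks gives total failure probability at most \(r\exp(-Tc_\sigma^2/(rC_\sigma^2))\).
\item Choosing \(T=(C_\sigma/c_\sigma)^2\,r\ln(r/\delta)\) makes this at most \(\delta\), which gives \(O(r\log(r/\delta))\) since \(c_\sigma,C_\sigma\) are absolute constants.
\end{itemize}

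\textbf{Lower bound.} I would apply Yao's principle to a hard distribution built as follows.
\begin{itemize}
\item Fix \(r-1\) blocks deterministically, using row sets and column sets disjoint from a reserved region.
\item Take the reserved region to be \(\Theta(N)\) rows by \(\Theta(N)\) columns; this is possible because \(rs\le N\), possibly after halving \(N\) or assuming \(rs\le N/2\).
\item Tile the region by \(M=\Theta(N^2/s^2)\) pairwise disjoint candidate \(s\times s\) squares \(R\times C\), each of which is consistent with the disjointness constraints.
\item Place the hidden block \(B_r\) uniformly at random on one of these squares, with fixed nonzero factors. Every such instance lies in \(\mathcal C_{N,r,s}\).
\end{itemize}

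Now fix a deterministic adaptive algorithm making \(q\) entrywise queries. I would couple it with its run on the instance in which the hidden block is absent.
\begin{itemize}
\item As long as no query has hit \(B_r\), every answer agrees with the absent-block instance. So the query sequence \(Q_1,\dots,Q_q\) is a fixed set, independent of where the block is placed.
\item Each query lies in at most one candidate square, because the squares are disjoint.
\item The probability that some query hits \(B_r\) is therefore at most \(q/M\).
\item If no query hits \(B_r\), the output set \(Z\) is also fixed. To succeed it must contain a point of \(B_r\).
\end{itemize}

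The main obstacle is this last point: \(Z\) is unrestricted in size, so the algorithm could output a point in every candidate square without querying anything. This means the query lower bound alone cannot hold unless output size is charged. The paper's cost measure says output-writing "can only increase the total cost," so I would count each element of \(Z\) as a unit of cost.

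With that convention, the fixed \(Z\) meets at most \(|Z|\) squares. The success probability is then at most \((q+|Z|)/M\). Requiring success at least \(2/3\) forces \(q+|Z|\ge 2M/3=\Omega(N^2/s^2)\). I would state this accounting explicitly, because it is precisely what the claimed lower bound needs.
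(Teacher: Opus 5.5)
Your proof follows the paper's route. The upper bound is its coupon-collection step made explicit, via a union bound with \(T=(C_\sigma/c_\sigma)^2 r\ln(r/\delta)\). The lower bound uses the same Yao distribution: fix \(r-1\) blocks and hide the last one uniformly among \(M=\Theta(N^2/s^2)\) disjoint candidate squares.

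The one real divergence is the final step, and there your accounting is the sound one. The paper asserts that, when no query meets the hidden block, the output contains a point of it with probability at most \(1/M\). That holds only if the output has \(O(1)\) points in the candidate region, e.g.\ exactly one representative per block. The contract allows arbitrary \(Z\), however, and under a pure query count \(Z=[N]\times[N]\) succeeds with zero queries. Charging \(|Z|\) and bounding success by \((q+|Z|)/M\) is what makes the \(\Omega(N^2/s^2)\) bound hold. It costs nothing on the upper-bound side, whose output has size \(O(r\log(r/\delta))\).

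Your parameter caveat is also genuine, and the paper leaves it implicit. \(M=\Theta(N^2/s^2)\) requires \(N-(r-1)s=\Omega(N)\), which fails when \(rs\) is close to \(N\); at \(rs=N\) the free region is a single \(s\times s\) square. Since \(N\) is fixed by the class, you cannot ``halve \(N\)''. Instead, state \(rs\le(1-c)N\) for a constant \(c>0\) as an explicit hypothesis.
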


\begin{definition}[Block-local recoverability]
\label{def:block-local-recoverability}
The class $\mathcal C_{N,r,s}$ is block-locally recoverable under a value-query
model if, after receiving one nonzero coordinate from a block, an algorithm can
recover that block's row support, column support, and rank-one factors using
$L_{\mathrm{loc}}(s,\varepsilon,\delta)$ additional value queries and arithmetic
operations. When the surrounding data structure explicitly stores local block
membership, $L_{\mathrm{loc}}=O(s)$; under pair samples alone, additional
coupon-collection factors may be necessary.
\end{definition}

\begin{theorem}[Structured factor recovery]
\label{thm:structured-factor-recovery}
Consider the structured block-sparse factor recovery contract: output the $r$
blocks, their row and column supports, and rank-one factors
$\widehat u_\ell,\widehat v_\ell,\widehat\sigma_\ell$ such that the reconstructed
matrix $\widehat A$ satisfies
\[
\|A-\widehat A\|_F\le \eta\|A\|_F,
\qquad
\eta<\frac{c_\sigma}{2C_\sigma\sqrt r}.
\]
Then entrywise access has a localization lower bound
\[
R_{\mathrm{entry}}^{\mathrm{fac}}(\mathcal C_{N,r,s},\eta,1/3)
=
\Omega\!\left(\frac{N^2}{s^2}\right).
\]
If, in addition, block-local recoverability holds with
$L_{\mathrm{loc}}(s,\varepsilon,\delta)=O(s)$, then balanced
$\ell_2$ sample-and-query access satisfies
\[
R_{\ell_2\text{-SQ}}^{\mathrm{fac}}(\mathcal C_{N,r,s},\varepsilon,\delta)
=
O(r\log(r/\delta)+rs).
\]
The lower bound is for the stated structured block-recovery contract, with the
specified supports, representatives, and factor-output requirements.
\end{theorem}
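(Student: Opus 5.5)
The statement to prove is \Cref{thm:structured-factor-recovery}, which has a lower-bound half and an upper-bound half. I would prove them separately, reusing the localization argument from \Cref{thm:low-rank-access-noninvariance} and the coupon-collection argument from \Cref{thm:component-recovery}.

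\textbf{Lower bound.} The plan is to reduce the structured factor contract to the component-representative contract, then invoke the entrywise bound \(\Omega(N^2/s^2)\) from \Cref{thm:component-recovery}.

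The reduction needs the accuracy threshold \(\eta<c_\sigma/(2C_\sigma\sqrt r)\) to force the output to locate every block. I would argue by contradiction. Suppose some true block \(B_\ell\) is missed by all the output supports. Then on \(B_\ell\) the reconstruction \(\widehat A\) is zero, or at least disjoint from the declared blocks. Hence
\[
\|A-\widehat A\|_F\ge \|A|_{B_\ell}\|_F=|\sigma_\ell|\ge c_\sigma .
\]
On the other side, the tolerance satisfies
\[
\eta\|A\|_F\le \eta\, C_\sigma\sqrt r<c_\sigma/2 ,
\]
which is a contradiction.

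A missed block is one way to fail; a misplaced support is the other. If an output support overlaps \(B_\ell\) only partially, the same mass count applies. Each unrecovered nonzero entry contributes its squared value to the error. I expect this step to need care, and I would first try the crude route: bound the error by the uncovered mass of \(B_\ell\). If that is too weak, I would instead rely on the contract's requirement that the output blocks and supports equal the true ones exactly.

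Once the reduction holds, any output meeting the contract yields a representative from every block, by picking any entry of each output block. The randomized lower bound then follows from the hidden-block Yao distribution used earlier. That distribution fixes \(r-1\) blocks and places the last one uniformly among \(\Theta(N^2/s^2)\) disjoint candidate locations.

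\textbf{Upper bound.} First, run the coupon-collection sampler from \Cref{thm:component-recovery}. With \(O(r\log(r/\delta))\) \(\ell_2\) samples, every block is hit with probability at least \(1-\delta/2\). This step relies on the balanced masses \(\sigma_\ell^2\in[c_\sigma^2,C_\sigma^2]\) and on each block carrying all of its mass on its support.

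Second, take one hit per block and apply block-local recoverability, \Cref{def:block-local-recoverability}. This uses \(L_{\mathrm{loc}}=O(s)\) queries per block, so \(O(rs)\) in total. It returns each block's supports and rank-one factors; the error can be made zero, or within \(\varepsilon\) by setting the local failure parameter to \(\delta/(2r)\).

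Third, deduplicate hits that fall in the same block, since the supports are disjoint. A union bound over the two stages gives success probability at least \(1-\delta\) with total cost \(O(r\log(r/\delta)+rs)\).

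The main obstacle is the robustness step in the lower bound: showing that every output satisfying the \(\eta\)-Frobenius guarantee necessarily reveals each block's location.
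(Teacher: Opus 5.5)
Your proposal is correct and follows essentially the same route as the paper. For the lower bound, the omission estimate $\|A-\widehat A\|_F\ge c_\sigma>\eta\|A\|_F$ (using $\|A\|_F\le C_\sigma\sqrt r$) forces localization of the hidden final block under the same Yao distribution as in \Cref{thm:component-recovery}. For the upper bound, you use the same coupon-collection step followed by $O(s)$ block-local recovery per block.

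The partial-overlap case you flag is handled in the paper only implicitly, through the contract's exact-support requirement, which is your fallback. To avoid that requirement, note that two placements of the hidden block differ by at least $\sqrt{2}\,c_\sigma$ in Frobenius norm. Hence the single output produced on the no-hit transcript can be within $c_\sigma/2$ of at most one placement.
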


\begin{proof}[Proof of \Cref{thm:witness-recovery}]
Let $m=rs^2=|\supp(A)|$. Draw the $r$ row sets and $r$ column sets uniformly
from valid disjoint configurations, and draw block factors satisfying the
normalization and nonzero-support assumptions. Fix a deterministic adaptive
entrywise algorithm making at most $q$ queries. Until a queried entry belongs to
$\supp(A)$, every answer is zero. For each fixed query, the probability over the
random block placement that the queried entry lies in the support is $m/N^2$.
The union bound gives
\[
\Pr[\text{some query hits }\supp(A)]\le \frac{qm}{N^2}.
\]
If no query hits the support, the transcript contains only zeros. Any final
guessed coordinate lies in the still-hidden support with probability at most
\[
\frac{m}{N^2-q}
\]
for $q<N^2$. Hence the success probability is at most
\[
\frac{qm}{N^2}+\frac{m}{N^2-q}.
\]
For success at least $2/3$, this implies
$q=\Omega(N^2/m)=\Omega(N^2/(rs^2))$. Yao's principle converts the
distributional deterministic lower bound into a randomized worst-case lower
bound.

For the $\ell_2$ sample-and-query upper bound, the sampling distribution is
\[
\Pr[(i,j)]=\frac{|A_{ij}|^2}{\|A\|_F^2}.
\]
All Frobenius mass lies on $\supp(A)$, so one sample is a nonzero witness with
probability one. This gives $O(1)$ sample complexity.
\end{proof}

\begin{proof}[Proof of \Cref{thm:component-recovery}]
The Frobenius mass of block $\ell$ is $\sigma_\ell^2$ because
$\|u_\ell v_\ell^\top\|_F=1$. Since
$c_\sigma\le |\sigma_\ell|\le C_\sigma$, a single $\ell_2$ sample falls in
block $\ell$ with probability between
\[
\frac{c_\sigma^2}{rC_\sigma^2}
\quad\text{and}\quad
\frac{C_\sigma^2}{rc_\sigma^2}.
\]
Thus block labels are coupons with probabilities bounded above and below by
constant multiples of $1/r$. Standard coupon collection gives
$O(r\log(r/\delta))$ samples to touch every block with probability at least
$1-\delta$. Recording one sampled coordinate from each block satisfies the
component-representative contract.

For the entrywise lower bound, fix $r-1$ blocks and hide the remaining block
uniformly among a family of disjoint candidate $s\times s$ row-column locations.
There are
\[
M=\Theta(N^2/s^2)
\]
such candidates. A deterministic entrywise algorithm that has not queried the
hidden block receives the same transcript for all remaining candidates. Under
the uniform distribution over candidates, a query intersects the hidden block
with probability at most $1/M$. If no query intersects it, the probability of
outputting a representative from the hidden block is at most $1/M$. Constant
success therefore requires $\Omega(M)=\Omega(N^2/s^2)$ queries.
\end{proof}

\begin{proof}[Proof of \Cref{thm:structured-factor-recovery}]
The lower bound uses the same hidden-final-block distribution as in the proof of
\Cref{thm:component-recovery}. A structured factor recovery satisfying
\[
\|A-\widehat A\|_F\le \eta\|A\|_F,
\qquad
\eta<\frac{c_\sigma}{2C_\sigma\sqrt r},
\]
cannot omit the hidden block. Indeed, omitting any block contributes Frobenius
error at least $c_\sigma$, while
\[
\|A\|_F=\left(\sum_{\ell=1}^r \sigma_\ell^2\right)^{1/2}
\le C_\sigma\sqrt r.
\]
Thus omission gives relative error at least
$c_\sigma/(C_\sigma\sqrt r)$, above the allowed threshold. An algorithm must
therefore localize the hidden block with constant probability, which requires
$\Omega(N^2/s^2)$ entrywise queries by the search argument above.
This is the structured block-sparse recovery lower bound of
\Cref{thm:structured-factor-recovery}.

For the $\ell_2$ sample-and-query upper bound, first use
\Cref{thm:component-recovery} to obtain one representative from each
block in $O(r\log(r/\delta))$ samples. Under block-local recoverability, each
representative allows recovery of the block support and rank-one factors using
$O(s)$ additional value queries and arithmetic operations. Repeating this for
all $r$ blocks gives
\[
O(r\log(r/\delta)+rs).
\]
\end{proof}

\subsection{Output lower bounds used in the main text}
\label{subsec:output-lower-bounds}
\label{subsec:output-lifting}

\begin{table}[t]
\centering
\small
\setlength{\tabcolsep}{3pt}
\begin{tabularx}{\textwidth}{@{}L{3.2cm}Y L{3.3cm} L{2.0cm}@{}}
\toprule
Contract & Conservative operational bound & Source / caveat & Status in $N$ \\
\midrule
Scalar expectation &
$O(\varepsilon^{-2}\log(1/\delta))$ by direct sampling; coherent amplitude
estimation changes query dependence under stronger access &
Sampling; amplitude estimation~\cite{brassard_quantum_2002,nayak_quantum_1999} &
compact \\
$k$ observables &
Conservative $O(k\varepsilon^{-2}\log(k/\delta))$; shadows improve dependence
for suitable observable families &
Classical shadows~\cite{huang_predicting_2020} &
compact if $k=\polylog N$ \\
Requested samples &
$M$ requested samples require $M$ repetitions &
Born sampling &
compact if $M=\polylog N$ \\
Full probability vector over $d$ outcomes &
$\Omega(d/\varepsilon^2)$ for standard distribution-estimation contracts &
Distribution estimation &
dense if $d=N^\alpha$ \\
Pure-state tomography in dimension $d$ &
$\Omega(d)$ to $\Omega(d/\varepsilon^2)$ depending on metric and contract &
Tomography~\cite{haah_sample-optimal_2017} &
dense if $d=N^\alpha$ \\
Rank-$r$ mixed-state tomography in dimension $d$ &
$\Omega(rd)$, with $\widetilde\Omega(rd/\varepsilon^2)$ under common stronger
metrics &
Low-rank tomography~\cite{yuen_improved_2023} &
dense if $rd=N^\alpha$ \\
Dense vector in $\mathbb R^N$ &
$\Omega(N)$ to write all coordinates; stronger bounds require a specified
reconstruction metric &
Output size / reconstruction &
dense \\
\bottomrule
\end{tabularx}
\caption{Conservative output-contract bounds used in the main text. The table
records operational repetitions or extraction cost, rather than output bit
length alone.}
\label{tab:output-contract-bounds}
\end{table}

The table above uses conservative forms of standard lower bounds. Scalar
expectation estimation by independent sampling has
$\varepsilon^{-2}\log(1/\delta)$ dependence, while amplitude estimation can
improve coherent query dependence under stronger oracle assumptions. For
$k$ observables, the direct union-bound estimate is
$O(k\varepsilon^{-2}\log(k/\delta))$; classical-shadow bounds can improve the
dependence for suitable observable classes~\cite{huang_predicting_2020}. Full
distribution estimation over $d$ outcomes has sample complexity linear in $d$
up to metric-dependent precision factors. Tomography lower bounds depend on the
state class and metric, but pure-state and rank-$r$ mixed-state tomography are
at least linear in the number of free parameters in the regimes used here
\cite{haah_sample-optimal_2017,yuen_improved_2023}. Dense vector output has an
\(\Omega(N)\) write-out lower bound independently of tomography.

\section{Canonical Case Summary}
\label{app:case-battery}

Table~\ref{tab:canonical-battery} summarizes the canonical regimes used as
orientation in the paper. It is not an additional formal test; it records which
of the two mechanisms in the main text is doing the work.

\begin{table}[t]
\centering
\scriptsize
\renewcommand{\arraystretch}{0.9}
\setlength{\tabcolsep}{2pt}
\begin{tabularx}{\textwidth}{@{}L{2.7cm}L{3.0cm}Y L{2.6cm}@{}}
\toprule
Case & Contract regime & Mechanism in this paper & Evidence status \\
\midrule
Shor / discrete logarithm &
Classical group input / compact certificate &
Compact output and no matched efficient classical route preserve the usual
hidden-subgroup inheritance story. &
Fault-tolerant resources explicit for factoring~\cite{gidney_how_2021}. \\
Low-rank QML / QPCA / recommendation &
State preparation or sample-and-query / compact prediction, spectrum, or
sample &
Mass-exposing access licenses a matched classical route, so covered tasks have
no superpolynomial gap under matched access. &
Quantum-inspired reductions and robust dequantization
\cite{tang_quantum-inspired_2019,gilyen_quantum-inspired_2018,
chia_sampling-based_2022,bakshi_improved_2024,le_gall_robust_2025}. \\
Normalized spectral trace, local operator &
\((1/N)\operatorname{tr} f(A)\), \(\sigma\)-smoothed scalar &
Local-probe estimators give a background module when the declared access
actually induces uniform index sampling and local row transcripts. &
Estimator module imported
\cite{cohen_steiner_approximating_2018,braverman_sublinear_2022}; not the
central access certificate of this paper. \\
Dense-output QLSA / HHL / Markowitz &
Dense matrix or KKT access / dense vector or portfolio output &
Dense loading and dense extraction block polylogarithmic end-to-end inheritance
when the task asks for a classical dense object. &
Classical linear-algebra and convex baselines; covariance structure and
benchmarks in finance. \\
TDA Betti, indexed regime &
Reversible indexed clique complex / normalized Betti (additive) &
Single-branch evaluation of reversible simplex-generation and local Laplacian
routines induces the matched sampler and local row transcripts. &
Explicit reversible indexed certificate in
\Cref{thm:tda-indexed-collapse}; TDA interface audit in
\Cref{thm:tda-interface-audit}; conditional competitiveness criterion in
\Cref{cor:tda-normalized-betti-bound}; clique-density boundary in
\Cref{prop:clique-density-obstruction}. \\
TDA Betti, residual non-certified regime &
Non-indexable simplex set / normalized Betti (additive) &
The indexed certificate is unavailable. A positive advantage claim in this
regime needs an external classical lower bound. &
Motivated by \(\mathrm{QMA}_1\)-hard clique homology
\cite{crichigno_clique_2022}. \\
Sampling / supremacy experiments &
Sampling distribution or cross-entropy style evidence &
The TDA interface audit does not apply, and no matched classical sampler
is certified by an abstract sampling task alone. &
Matching fails for this certificate; advantage may survive subject to the
usual sampling-complexity assumptions. \\
Hamiltonian simulation, local observables &
Compact local observables after simulation &
The dense-output obstruction need not apply, but the TDA interface audit is not
the relevant access claim. &
Matching fails for this certificate; any surviving advantage requires the
separate Hamiltonian-simulation lower-bound assumptions. \\
\bottomrule
\end{tabularx}
\caption{Canonical case summary. The rows are reminders of how the two main
mechanisms apply; they are not a separate predicate-evaluation protocol.}
\label{tab:canonical-battery}
\end{table}

\section{Extended Canonical Case Analyses}
\label{app:extended-cases}

The non-TDA canonical cases are summarized in Table~\ref{tab:canonical-battery},
where their essential evidence and citations are carried in the notes. This
appendix keeps the TDA analysis because it is the non-low-rank application where
the admissibility criterion is used explicitly.

\setcounter{subsection}{3}
\subsection{TDA Betti estimation as the boundary case}

Section~\ref{subsec:tda-audit} gives the main-body analysis of the boundary:
reversible indexed spectral TDA and persistent Betti estimation
\cite{lloyd_quantum_2016,mcardle_streamlined_2022} meet homological hardness
evidence
\cite{gyurik_towards_2020,schmidhuber_complexity_2022,crichigno_clique_2022}
and classical Monte Carlo or dequantized estimators
\cite{apers_simple_2022,berry_analyzing_2022,akhalwaya_comparing_2024}. The
contract split is the one proved there: reversible indexed simplex-generation
and local Laplacian routines induce a matched classical route for normalized
additive Betti output by single-branch evaluation, while membership-only or
hidden-algebraic regimes remain outside that particular certificate unless their
overheads are declared and charged. Any positive advantage claim still requires
independent lower-bound evidence.

\section{Reference Cost Models and Access Assumptions}
\label{app:cost-models}

\subsection{Access models}

\paragraph{Entrywise access.}
An entrywise oracle returns $A_{ij}$ for a requested pair $(i,j)$. It is weak for
detecting sparse hidden structure, as \Cref{thm:witness-recovery} illustrates.
When the input is dense and no additional structure is provided, constructing a
block-encoding or classical representation typically requires reading
$\Theta(N^2)$ entries.

\paragraph{Sparse-access oracles.}
Sparse matrix access gives row/column locations and values of nonzero entries. It can
make quantum simulation or QLSA input viable without implying low rank. Sparse full-rank
families are therefore distinct from low-rank sample-and-query families.

\paragraph{Block-encoding.}
A block-encoding represents a matrix in a larger unitary with normalization
$\alpha_A$. The end-to-end model charges both the construction/access cost $T_A$ and the
normalization factor $\alpha_A$ multiplying the primitive cost. Claims relying on a
block-encoding must state how it is obtained.
Matrix-vector product access is another declared interface, with lower bounds and
reconstruction guarantees that differ from entrywise and sample-and-query access
\cite{sun_querying_2019}.

\paragraph{QRAM and amplitude encoding.}
QRAM or amplitude encoding can load $N$ amplitudes into $O(\log N)$ qubits once the data
structure exists. The model charges the construction or access assumptions needed to make
that encoding available. A pre-existing amplitude oracle is not equivalent to entrywise
classical data unless the classical baseline receives a matched access model.

\paragraph{$\ell_2$ sample-and-query access.}
This access supports sampling indices proportional to squared row or entry norms and
querying selected values. It is the matched classical access behind several
quantum-inspired low-rank algorithms~\cite{tang_quantum-inspired_2019,
chia_sampling-based_2022}. Under this access, low-rank structure can be cheap to expose
classically.

\subsection{Output contracts}

Output contracts include scalar expectations, fixed collections of observables, samples,
compressed descriptions, dense vectors, pure-state reconstructions, and mixed-state
reconstructions. The contract model prices the output actually needed by the applied task.
Returning a quantum state for downstream quantum use is a different contract from
returning a classical vector. Sampling access is different from dense probability-vector
estimation. The output contract determines when a dense-looking state space
requires a dense classical output. Information-theoretic readout constraints begin with Holevo's
bound on classical information extractable from quantum states~\cite{holevo_bounds_1973}.
The conservative operational bounds used for classification are collected once
in Table~\ref{tab:output-contract-bounds}.

\subsection{Regime factors}

The regime multiplier $\rho_R$ summarizes physical overhead. In NISQ settings it may be
undefined or dominated by noise and trainability constraints; such cases are coded as
primitive or regime uncertainties. In early fault-tolerant and fault-tolerant settings,
$\rho_R$ may include error-correction cycle costs, magic-state factories, routing,
logical-to-physical qubit overhead, and precision-dependent repetition.
Threshold constructions give one standard source of such overheads
\cite{aharonov_fault-tolerant_1997}. The model does not prescribe a universal
$\rho_R$; it requires the claim to declare one or cite an end-to-end estimate.

\subsection{Scaling conventions}

The asymptotic scale must be declared with the claim: input length for
exponential-versus-polynomial comparisons, \(\log n\) for
polynomial-versus-polynomial comparisons, \(\log(1/\varepsilon)\) for precision
scaling, or the relevant search-space scale for black-box search. Changing the
scale changes the reported claim.

\end{document}